\newcommand{\ie}{\emph{i.e.}}
\newcommand{\cf}{\emph{cf}}
\newcommand{\etal}{\emph{et al}}
\newcommand{\Real}{\mathbb{R}}
\newcommand{\Nat}{\mathbb{N}}
\newcommand{\sii}{L^2}
\newcommand{\sobi}{\mathop{W_0^{1,2}}\nolimits}
\newcommand{\Sobi}{\mathop{W^{1,2}}\nolimits}
\newcommand{\dSobi}{\mathop{W^{-1,2}}\nolimits}
\newcommand{\eps}{\varepsilon}
\newcommand{\Dom}{\mathfrak{D}}
\newcommand{\Hilbert}{\mathcal{H}}
\newcommand{\diag}{\mathrm{diag}}
\newcommand{\supp}{\mathop{\mathrm{supp}}\nolimits}
\numberwithin{equation}{section}
\newtheorem{Theorem}{Theorem}[section]
\newtheorem{Lemma}[Theorem]{Lemma}
\newtheorem{Corollary}[Theorem]{Corollary}
\theoremstyle{remark}
\newtheorem{Remark}[Theorem]{Remark}
\newtheorem*{Acknowledgment}{Acknowledgment}
\theoremstyle{definition}
\newtheorem{Definition}[Theorem]{Definition}
\begin{document}
%
%
\title{Twisting versus bending in quantum waveguides}
\author{David Krej\v{c}i\v{r}\'{\i}k}
\address{Department of Theoretical Physics, Nuclear Physics Institute,
Academy of Sciences, 250\,68 \v{R}e\v{z}, Czech Republic}
\email{krejcirik@ujf.cas.cz}
\subjclass[2000]{Primary 58J50, 81Q10; Secondary 53A04}
\thanks{
Research supported by FCT, Portugal,
through the grant SFRH/\-BPD/\-11457/\-2002,
and by the Czech Academy of Sciences and its Grant Agency
within the projects IRP AV0Z10480505 and A100480501,
and by the project LC06002 of the Ministry of Education,
Youth and Sports of the Czech Republic.
\bigskip \\
\emph{Date:} 25 March 2009.
\bigskip \\
This is a corrected version of the paper published in \\
\emph{Proc.\ Sympos.\ Pure Math., vol.~77, pp. 617--636,
Amer.\ Math.\ Soc., Providence, RI, 2008}.
\\
}
\begin{abstract}
\noindent
We make an overview of spectral-geometric effects
of twisting and bending in quantum waveguides
modelled by the Dirichlet Laplacian
in an unbounded three-dimensional tube
of uniform cross-section.
We focus on the existence of Hardy-type inequalities
in twisted tubes of non-circular cross-section.
\end{abstract}
\maketitle
\tableofcontents
%
%
%
\newpage
\section{Introduction}
%
The Dirichlet Laplacian in tubular domains
is a simple but remarkably successful model
for the quantum Hamiltonian in mesoscopic waveguide systems.
One of the main questions arising within the scope
of electronic transport
is whether or not there are geometrically induced bound states.
Indeed, some of the most important theoretical results in the field
are a number of theorems guaranteeing the existence
of stationary solutions to the Schr\"odinger equation
under rather simple and general physical conditions
(\cf~\cite{DE,LCM,KKriz} and references therein).

From the mathematical point of view,
one deals with a spectral-geometric problem
in quasi-cylindrical domains~\cite[Sec.~49]{Glazman},
or more generally in non-compact non-complete manifolds.
For such domains, in general,
the precise location of the essential spectrum
of the Dirichlet Laplacian is difficult,
and the existence of eigenvalues is a highly non-trivial fact.

The purpose of the present paper is to review
recent developments in the spectral theory
of a specific class of quantum waveguides
modelled by the Dirichlet Laplacian
in three-dimensional unbounded tubes of uniform cross-section.
We discuss how the spectrum depends upon
two independent geometric deformations: bending and twisting
(see Figure~\ref{f-tube}).
The attention is focused on improvements of the most recent results
as regards the existence of Hardy-type inequalities
in twisted tubes of non-circular cross-section;
new results and/or proofs are presented.

\begin{figure}[h]
\begin{center}
\includegraphics[width=\textwidth]{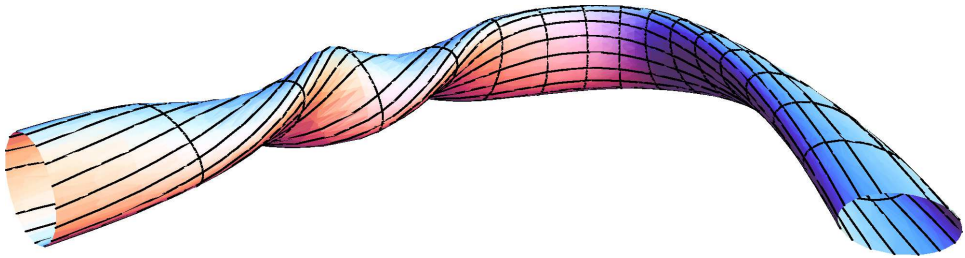}
\caption{An example of a tube of elliptical cross-section.
Twisting and bending are demonstrated on the left and right
part of the picture, respectively.}\label{f-tube}
\end{center}
\end{figure}

The outline of the paper is as follows.
In the following Section~\ref{Sec.geometry},
we introduce fundamental geometric objects
that will be used throughout the paper.
In particular, we identify a curved tube with a Riemannian manifold
and bring up the notions of bending and twisting.
In Section~\ref{Sec.Hamiltonian},
we identify the Dirichlet Laplacian in the tube
with the Laplace-Beltrami operator in the manifold.
Section~\ref{Sec.ess} is devoted to the location
of the essential spectrum under the hypotheses
that the twisting and bending vanish at infinity of the tube.

The effects of bending and twisting are studied
in Sections~\ref{Sec.bending} and~\ref{Sec.twisting}, respectively.
More precisely, in Section~\ref{Sec.bending}
we show that bending gives rise to eigenvalues
below the essential spectrum in non-twisted tubes,
while in Section~\ref{Sec.twisting}
we deal with the Hardy inequalities due to twisting in non-bent tubes.
As an application of the Hardy inequality,
in Section~\ref{Sec.mild} we show that the discrete spectrum
of a simultaneously bent and twisted tube is empty
provided that the bending is mild in a sense.
The paper is concluded in Section~\ref{Sec.end}
by referring to some open problems.

The source reference for Section~\ref{Sec.bending} (bending)
is my collaboration with Chen\-aud, Duclos and Freitas~\cite{ChDFK},
where the spectral theory for bent tubes of general cross-section
is performed for the first time.
However, the effect of bending has been known
for almost two decades \cite{ES,GJ,DE}.
On the other hand, except for some heuristic considerations~\cite{CB1,CB2},
the effect of twisting has been overlooked up to recently.
Section~\ref{Sec.twisting} (twisting)
and also Section~\ref{Sec.mild} are mainly based
on my collaboration with Ekholm and Kova\v{r}{\'\i}k~\cite{EKK},
but some of the ideas of~\cite{K3,FK3} are present too.
The reference list is likely far from being complete
and serves only this expos\'e.
\begin{Acknowledgment}
I am grateful to Hynek Kova\v{r}{\'\i}k and Jan K\v{r}{\'\i}\v{z}
for comments on an early draft of this paper.
\end{Acknowledgment}
%

\section{The geometry of a curved tube}\label{Sec.geometry}
%
\subsection{The reference curve}
Let $\Gamma:\Real\to\Real^3$ be a $C^3$-smooth curve
parameterized by its arc-length.
The curvature of~$\Gamma$ is defined by $\kappa:= |\ddot{\Gamma}|$.

Regarding~$\Gamma$ as a path of a unit-speed traveller in the space,
it is convenient to describe its motion
in a (non-inertial) reference frame moving along the curve.
One usually adopts the distinguished Frenet frame \cite[Sec.~1.2]{Kli},
\ie, the orthonormal triad of smooth vector fields $\{e_1,e_2,e_3\}$
-- called the tangent, normal and binormal vectors respectively --
defined by the prescriptions
\begin{equation}\label{Frenet}
  e_1 := \dot{\Gamma} \,, \qquad
  e_2 := \kappa^{-1} \ddot{\Gamma} \,, \qquad
  e_3 := e_1 \times e_2 \,.
\end{equation}
Here the cross denotes the vector product in~$\Real^3$.

Of course, one has to assume that the curvature
is never vanishing, \ie~$\kappa>0$,
in order to justify the construction~\eqref{Frenet}.
If~$\Gamma$ is a straight line (\ie~$\kappa=0$ identically),
one can choose a constant (inertial) Frenet frame instead.
Gluing constant Frenet frames to
the distinguished Frenet frame~\eqref{Frenet},
it is also possible to include curves satisfying $\kappa>0$
on a compact subset of~$\Gamma$ and being straight elsewhere
(\cf~\cite{EKK}), and others.
On the other hand, there exist infinitely smooth curves
with no smooth Frenet frame
(\cf~\cite[Chap.~1, p.~34]{Spivak2} for an example).

In any of the situations above when a global Frenet frame exists,
we say that the curve possesses an \emph{appropriate Frenet frame}.
Then the Frenet frame evolves along the curve via
the Serret-Frenet formulae~\cite[Sec.~1.3]{Kli}
\begin{equation}\label{Serret}
  \begin{pmatrix}
    e_1 \\ e_2 \\ e_3
  \end{pmatrix}^{\textbf{.}}
  =
  \begin{pmatrix}
    0 & \kappa & 0 \\
    -\kappa & 0 & \tau \\
    0 & -\tau & 0
  \end{pmatrix}
  \!
  \begin{pmatrix}
    e_1 \\ e_2 \\ e_3
  \end{pmatrix}
  ,
\end{equation}
where~$\tau$ is the torsion of~$\Gamma$,
actually defined by~\eqref{Serret}.

\subsection{The general moving frame}
Let $\theta:\Real\to\Real$ be a $C^1$-smooth function.
We define a rotation matrix-valued function
$\mathcal{R}^\theta:\Real \to \mathrm{SO}(3)$ by setting
\begin{equation*}
  \mathcal{R}^\theta
  :=
  \begin{pmatrix}
    1 & 0 & 0 \\
    0 & \cos\theta & -\sin\theta \\
    0 & \sin\theta & \cos\theta
  \end{pmatrix}
  .
\end{equation*}
Finally, we introduce a new moving frame
$\{e_1^\theta,e_2^\theta,e_3^\theta\}$ along~$\Gamma$
by rotating normal components of the appropriate Frenet frame
$\{e_1,e_2,e_3\}$ of~$\Gamma$ by the angle function~$\theta$:
\begin{equation}\label{frame}
  e^\theta_i := \sum_{j=1}^3 \mathcal{R}^\theta_{ij} \, e_j
  \,, \qquad
  i \in \{1,2,3\}
  \,.
\end{equation}

Using~\eqref{Serret}, it is easy to check
that the new frame evolves along the curve via
\begin{equation}\label{motion}
  \begin{pmatrix}
    e_1^\theta \\ e_2^\theta \\ e_3^\theta
  \end{pmatrix}^{\textbf{.}}
  =
  \begin{pmatrix}
    0 & \kappa\cos\theta & \kappa\sin\theta \\
    -\kappa\cos\theta & 0 & \tau-\dot{\theta} \\
    -\kappa\sin\theta & -(\tau-\dot{\theta}) & 0
  \end{pmatrix}
  \!
  \begin{pmatrix}
    e_1^\theta \\ e_2^\theta \\ e_3^\theta
  \end{pmatrix}
  .
\end{equation}
\subsection{The cross-section}
Let~$\omega$ be a bounded open connected set in $\Real^2$.
We do not assume any regularity conditions about the boundary~$\partial\omega$.
It is convenient to introduce the quantity
\begin{equation*}
  a:=\sup_{t\in\omega}|t| \,,
\end{equation*}
measuring the distance of the farthest point of~$\overline{\omega}$ to the origin.
We say that~$\omega$ is
\emph{rotationally invariant with respect to the origin}
if
\begin{equation*}
  \forall \vartheta \in (0,2\pi) \,, \qquad
  \omega_\vartheta :=
  \bigg\{ \Big(
  \sum_{j=2}^3 t_j\,\mathcal{R}^\vartheta_{j2} \,,
  \sum_{j=2}^3 t_j\,\mathcal{R}^\vartheta_{j3}
  \Big) \
  \bigg|\ (t_2,t_3)\in\omega \bigg\}
  = \omega
  \,,
\end{equation*}
with the natural convention that we identify~$\omega$ and~$\omega_\vartheta$
(and other open sets)
provided that they differ on a set of zero capacity.
Hence, modulus a set of zero capacity,
$\omega$~is rotationally symmetric
with respect to the origin in~$\Real^2$ if, and only if,
it is a disc or an annulus centered at the origin of~$\Real^2$.

\subsection{The tube}
A tube~$\Omega$ is defined by moving the cross-section~$\omega$
along the reference curve~$\Gamma$
together with a generally rotated frame~\eqref{frame}.
More precisely, we set
\begin{equation*}
  \Omega := \mathcal{L}(\Real\times\omega)
  \,,
\end{equation*}
where $\mathcal{L}$ is the mapping from
the straight tube $\Real\times\omega$ to~$\Real^3$
defined by
\begin{equation}\label{tubemap}
  \mathcal{L}(s,t):=\Gamma(s) + \sum_{j=2}^3 t_j \, e_j^\theta(s)
  \,.
\end{equation}
Since the curve~$\Gamma$ can be reconstructed from
the curvature functions~$\kappa$ and~$\tau$
(\cf~\cite[Thm.~1.3.6]{Kli}),
the tube~$\Omega$ is fully determined by giving
the cross-section~$\omega$ (including its position in~$\Real^2$)
and the triple of functions $\kappa$, $\tau$ and~$\theta$.

Our strategy to deal with the curved geometry of the tube
is to identify~$\Omega$ with the Riemannian manifold
$(\Real\times\omega,G)$, where $G=(G_{ij})$ is the metric tensor
induced by the embedding~$\mathcal{L}$, \ie,
$$
  G_{ij}:=(\partial_i\mathcal{L})\cdot(\partial_j\mathcal{L})
  \,, \qquad
  i,j \in \{1,2,3\}
  \,.
$$
Here the dot denotes the scalar product in~$\Real^3$.
In other words, we parameterize~$\Omega$ globally
by means of the ``coordinates'' $(s,t)$ of~\eqref{tubemap}.
To this aim, we need to impose natural restrictions
in order to ensure that~$\mathcal{L}$ induces a $C^1$-diffeomorphism
between $\Real\times\omega$ and~$\Omega$.

Using~\eqref{motion}, we find
\begin{equation}\label{metric}
  G =
  \begin{pmatrix}
    h^2+h_2^2+h_3^2  & h_2 & h_3 \\
    h_2 & 1 & 0 \\
    h_3 & 0 & 1 \\
  \end{pmatrix}
  , \quad
  \begin{aligned}
  h(s,t)
  &:= 1 - [t_2\cos\theta(s)+t_3\sin\theta(s)] \, \kappa(s) \,,
  \\
  h_2(s,t)
  &:= - t_3 \, [\tau(s)-\dot{\theta}(s)] \,,
  \\
  h_3(s,t)
  &:= t_2 \, [\tau(s)-\dot{\theta}(s)] \,.
\end{aligned}
\end{equation}
Consequently,
$$
  |G| := \det(G) = h^2 \,.
$$
By virtue of the inverse function theorem,
the mapping~$\mathcal{L}$ induces a local $C^1$-diffeomorphism
provided that the Jacobian~$h$ does not vanish on $\Real\times\omega$.
In view of the uniform bounds
\begin{equation}\label{1<G<1}
  0 <
  1 - a \, \|\kappa\|_{L^\infty(\Real)}
  \ \leq \ h \ \leq \
  1 + a \, \|\kappa\|_{L^\infty(\Real)}
  < \infty
  \,,
\end{equation}
the positivity of~$h$ is guaranteed by the hypothesis
\begin{equation}\label{Ass.basic1}
  \kappa \in L^\infty(\Real)
  \qquad\mbox{and}\qquad
  a \, \|\kappa\|_{L^\infty(\Real)} < 1
  \,.
\end{equation}
The mapping then becomes a global diffeomorphism if,
in addition to~\eqref{Ass.basic1}, we assume that
\begin{equation}\label{Ass.basic2}
  \mathcal{L} \quad \mbox{is injective}
  \,.
\end{equation}
For sufficient conditions ensuring~\eqref{Ass.basic2}
we refer to~\cite[App.]{EKK}.

\subsection{The natural hypotheses}
For the convenience of the reader,
we summarize here characteristic conditions
needed for the construction of a tube~$\Omega$:

\smallskip
\noindent
\mbox{\emph{1.~the reference curve~$\Gamma$ is $C^3$-smooth
and possesses an appropriate Frenet frame;}}%

\smallskip
\noindent
\emph{2.~the cross-section~$\omega$ is bounded;}

\smallskip
\noindent
\emph{3.~the angle function~$\theta$ is $C^1$-smooth;}

\smallskip
\noindent
\emph{4.~\eqref{Ass.basic1} and \eqref{Ass.basic2} hold
(\ie, $\Omega$ is not self-intersecting).}

\smallskip
\noindent
These hypotheses will be assumed henceforth,
without any further repetitions.

\begin{Remark}
Relaxing the geometrical interpretation of~$\Omega$
being a non-self-inter\-secting tube in~$\Real^3$,
it is possible to consider $(\Real\times\omega,G)$
as an abstract Riemannian manifold
where only the reference curve~$\Gamma$ is embedded in~$\Real^3$.
Then one does not need to assume~\eqref{Ass.basic2},
and the spectral results below hold in this more general situation, too.
\end{Remark}
\subsection{The definitions of bending and twisting}\label{Sec.defs}
It is clear from the equations of motion
of the general moving frame~\eqref{motion}
that there are two independent geometric effects in curved tubes.
\begin{Definition}[bending]\label{Def.bending}
The tube~$\Omega$ is said to be \emph{bent} if, and only if,
the reference curve~$\Gamma$ is not a straight line,
\ie, $\kappa \not= 0$.
\end{Definition}
\begin{Definition}[twisting]\label{Def.twisting}
The tube~$\Omega$ is said to be \emph{twisted} if, and only if,
the cross-section~$\omega$ is not rotationally invariant
with respect to the origin
and \mbox{$\tau-\dot{\theta} \not= 0$}.
\end{Definition}

Since our class of tubes is such that the cross-section~$\omega$
is locally perpendicular to the tangent vector of~$\Gamma$,
it is easy to check that
the Definitions~\ref{Def.bending} and~\ref{Def.twisting}
are independent of the ``parametrization'' of~$\Omega$,
\ie~the possibly different choice of the reference curve~$\Gamma$,
the position of~$\omega$ in~$\Real^2$ and the function~$\theta$
leading to the same shape of~$\Omega$.

Of course, in the second definition it is necessary to assume that
the cross-section is not rotationally invariant with respect to the origin,
since the shape of the tube~$\Omega$ is not influenced
by a special choice of~$\theta$ if the cross-section
is a disc or an annulus centered at the reference line.

The message of the first definition is clear:
the reference curve must be non-trivially curved
to give rise to a bending of the tube.
On the other hand, the requirement $\tau-\dot{\theta} \not= 0$
is less intuitive in the definition of twisting,
unless~$\Gamma$ is straight.
Therefore we point out the equivalence
of the following statements:

\smallskip
\noindent
1.~\underline{$\tau-\dot{\theta} = 0$}.
Regarding this identity as a differential equation for~$\theta$,
its solution leads to a special choice
of the moving frame~\eqref{frame} along~$\Gamma$,
unique up to initial conditions.
This special frame is known as the Tang frame
in the physical literature \cite{Tsao-Gambling_1989}.

\smallskip
\noindent
2.~\underline{No transverse rotations}.
Let us regard the parallel curve
$
  s \mapsto \Gamma^\theta(s) := \mathcal{L}\big(s,a,0\big)
$
as the path of a traveller in the space.
Calculating its velocity
$$
  \dot{\Gamma}^\theta
  = (1-a\,\kappa\cos\theta) \, e_1
  + a \, (\tau-\dot{\theta}) \, e_3^\theta
  \,,
$$
we see that its motion is non-inertial
unless~$\Gamma$ is a straight line and~$\theta$ is a constant.
From the formula we also conclude that
the component of the angular velocity
of $\Gamma^\theta$ relative to~$\Gamma$
that corresponds to the instantaneous rotations of~$\Gamma^\theta$
about the tangent~$e_1$ is precisely $\tau-\dot{\theta}$.

\smallskip
\noindent
3.~\underline{Orthogonality}.
From the expression for the metric~\eqref{metric},
it is readily seen that the ``coordinates'' $(s,t)$
of~\eqref{tubemap} are orthogonal if, and only if,
$\tau-\dot{\theta} = 0$ holds.
This makes the Tang frame a technically useful choice
for the tubes with cross-sections rotationally invariant
with respect to the origin.

\smallskip
\noindent
4.~\underline{Zero intrinsic curvature}.
Let~$\Sigma$ be the ruled surface generated
by the vectors~$e_2^\theta$ along~$\Gamma$ (see Figure~\ref{f-strip}),
\ie, $\Sigma:=\mathcal{L}\big(\Real\times(0,a)\times\{0\}\big)$.
Then the item~1 is equivalent to the fact that
the Gauss curvature of~$\Sigma$ vanishes identically
(\cf~\cite[Sec.~2]{K3}).

\begin{figure}[h]
\begin{center}
\includegraphics[width=\textwidth]{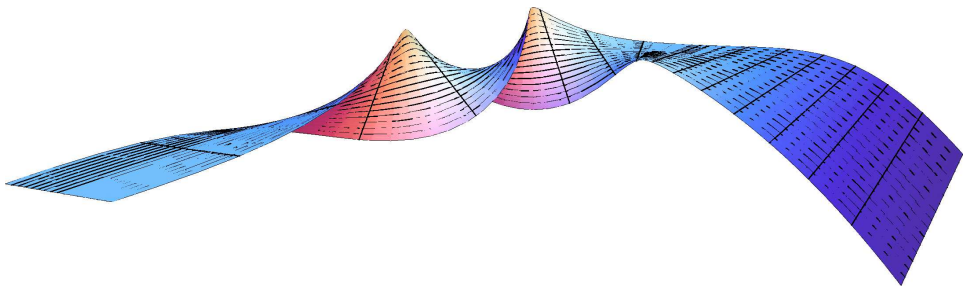}
\caption{The ruled surface~$\Sigma$ associated
with the tube of Figure~\ref{f-tube}.}\label{f-strip}
\end{center}
\end{figure}

\smallskip
\noindent
5.~\underline{Parallel transport}.
It is a well known fact~\cite[Prop.~3.7.5]{Kli}
that the Gauss curvature of a ruled surface vanishes identically
if, and only if, the surface is developable, \ie,
the surface normal vector field is a constant along generators.

\smallskip
\noindent
6.~\underline{Zero Berry phase}.
\footnote{
I am grateful to Yves Colin de Verdi{\`e}re
for pointing out this equivalence
during my talk in Cambridge~\cite{K-Cambridge}.
}
Equipping the normal fiber of~$\Gamma$ with its Berry (geometric) connection,
the Tang frame is the only frame whose transport along~$\Gamma$ is parallel
(\cf~\cite[Sec.~5]{CdeV_2006}).
Alternatively, by~\cite[Prop.~1]{CdeV_2006},
the Berry connection of~$\Sigma$ is equal to its Levi-Civita connection
whose curvature is the Gauss curvature of~$\Sigma$
(see the item~4 above).

\section{The Hamiltonian}\label{Sec.Hamiltonian}
%
\subsection{The initial Laplacian}
Let us recall that,
under the hypotheses~\eqref{Ass.basic1} and~\eqref{Ass.basic2},
the tube~$\Omega$ is an open subset of~$\Real^3$.
Hence, the corresponding Dirichlet Laplacian
can be introduced in a standard way
as the self-adjoint operator~$-\Delta_D^\Omega$ in $\sii(\Omega)$
associated with the quadratic form
\begin{equation*}
  Q_D^\Omega[\Psi] := \|\nabla \Psi\|_{\sii(\Omega)}^2
  \,, \qquad
  \Psi \in \Dom(Q_D^\Omega) := \sobi(\Omega)
  \,.
\end{equation*}
\subsection{The Laplacian in curvilinear coordinates}
Our strategy to investigate~$-\Delta_D^\Omega$
is to express it in the coordinates determined by~\eqref{tubemap}.
More specifically, recalling the diffeomorphism
between $\Real\times\omega$ and~$\Omega$ given by~$\mathcal{L}$,
we can identify the Hilbert space $\sii(\Omega)$ with
\begin{equation}\label{Hilbert}
  \Hilbert := \sii(\Real\times\omega,G)
  \,.
\end{equation}
The latter means the usual $\sii$-space over $\Real\times\omega$
equipped with the inner product
$$
  (\psi,\phi)_\Hilbert
  := \int_{\Real\times\omega} \overline{\psi(s,t)} \, \phi(s,t) \
  h(s,t) \, ds \, dt
  \,.
$$

Using the usual differential-geometric calculus,
we then get that the Laplacian~$-\Delta_D^\Omega$ is unitarily equivalent
to the operator~$H$ in~$\Hilbert$ associated
with the quadratic form
\begin{equation}\label{form}
  Q[\psi] := \left(
  \partial_i\psi, G^{ij} \, \partial_j\psi
  \right)_\Hilbert
  \,, \qquad
  \psi \in \Dom(Q) := \sobi(\Real\times\omega,G)
  \,.
\end{equation}
Here and hereafter we adopt the Einstein summation convention,
the range of indices being $1,2,3$.
$G^{ij}$ stands for the coefficients of the matrix inverse to~$G$:
\begin{equation*}
  G^{-1} = \frac{1}{h^2}
  \begin{pmatrix}
    1 & -h_2 & -h_3 \\
    -h_2 & h^2+h_2^2 & h_2 h_3 \\
    -h_3 & h_3 h_2 & h^2+h_3^2 \\
  \end{pmatrix}
  .
\end{equation*}
Finally, $\sobi(\Real\times\omega,G)$ denotes the completion
of $C_0^\infty(\Real\times\omega)$ with respect to the norm
$
  \|\cdot\|_{\Dom(Q)} :=
  \left(
  Q[\cdot] + \|\cdot\|_\Hilbert^2
  \right)^{1/2}
$.
If the functions~$\kappa$ and $\tau-\dot\theta$ are bounded,
then the $\Dom(Q)$-norm is equivalent to the usual norm
in $\Sobi(\Real\times\omega)$.

\section{Stability of the essential spectrum}\label{Sec.ess}
%
If the tube is straight, \ie, $\kappa=0=\tau-\dot\theta$,
the metric~\eqref{metric} reduces to the Euclidean metric
and the spectrum of the Laplacian can be found easily
by ``separation of variables'':
\begin{equation*}
  \sigma(-\Delta_D^{\Real\times\omega})
  = \sigma_\mathrm{ess}(-\Delta_D^{\Real\times\omega})
  = [E_1,\infty)
  \,, \qquad\mbox{where}\qquad
  E_1 := \inf\sigma(-\Delta_D^\omega)
\end{equation*}
denotes the first Dirichlet eigenvalue in the cross-section~$\omega$.
The positive normalized eigenfunction of~$-\Delta_D^\omega$
corresponding to~$E_1$ will be denoted by~$\mathcal{J}_1$.

The essential spectrum of the Laplacian in a manifold is
determined by the behaviour of the metric at infinity
(and possibly at the boundary) only.
Inspecting the dependence of the coefficients of~\eqref{metric}
on large ``longitudinal distances''~$s$,
in our case it is natural to expect that the interval $[E_1,\infty)$
will form the essential spectrum of~$-\Delta_D^\Omega$ as well,
provided that
\begin{equation}\label{Ass.decay}
  \lim_{|s|\to\infty} \kappa(s) = 0
  \qquad\mbox{and}\qquad
  \lim_{|s|\to\infty} [\tau(s)-\dot\theta(s)] = 0
  \,.
\end{equation}
\begin{Theorem}\label{Thm.ess}
Under the hypotheses~\eqref{Ass.decay},
$$
  \sigma_\mathrm{ess}(-\Delta_D^\Omega) = [E_1,\infty)
  \,.
$$
\end{Theorem}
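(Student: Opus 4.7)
The argument naturally splits into two inclusions, and in view of the unitary equivalence of Section~\ref{Sec.Hamiltonian} it suffices to establish $\sigma_\mathrm{ess}(H) = [E_1,\infty)$.

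For the inclusion $[E_1,\infty) \subseteq \sigma_\mathrm{ess}(H)$, I would produce Weyl singular sequences localized far out along the tube, where the metric is asymptotically Euclidean. Given $\lambda = E_1 + k^2$ with $k \geq 0$, choose centres $s_n \to \infty$ so widely spaced that the supports below are disjoint, and set
\begin{equation*}
  \psi_n(s,t) := n^{-1/2} \, \chi\bigl((s-s_n)/n\bigr) \, e^{iks} \, \mathcal{J}_1(t)
\end{equation*}
for a fixed nontrivial $\chi \in C_0^\infty(\Real)$. The decay hypothesis~\eqref{Ass.decay} makes $\kappa$ and $\tau-\dot\theta$ arbitrarily small on $\supp\psi_n$ for $n$ large, so the coefficients of~\eqref{metric} are close to the Euclidean ones there. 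A useful tool, obtained by completing the square in $G^{ij}\partial_i\psi\,\partial_j\bar\psi$, is the identity
\begin{equation*}
  Q[\psi] = \int_{\Real\times\omega} h^{-1} \bigl|\partial_s\psi - (\tau-\dot\theta)\,\partial_\vartheta\psi\bigr|^2\,ds\,dt + \int_{\Real\times\omega}\bigl(|\partial_2\psi|^2+|\partial_3\psi|^2\bigr) h\,ds\,dt,
\end{equation*}
with $\partial_\vartheta := t_2\partial_3 - t_3\partial_2$. Using it, together with the coordinate expression of $H$, one verifies $\|\psi_n\|_\Hilbert \to \|\chi\|_{\sii(\Real)}$, $\psi_n \rightharpoonup 0$, and eventually $\|(H-\lambda)\psi_n\|_\Hilbert \to 0$; Weyl's criterion then places $\lambda$ in $\sigma_\mathrm{ess}(H)$.

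For the opposite inclusion $\sigma_\mathrm{ess}(H) \subseteq [E_1,\infty)$, I would invoke Persson's theorem,
\begin{equation*}
  \inf\sigma_\mathrm{ess}(H) = \lim_{R\to\infty} \inf\!\left\{\frac{Q[\psi]}{\|\psi\|_\Hilbert^2} : \psi \in C_0^\infty(\Real\times\omega),\ \supp\psi\subset\{|s|>R\}\times\omega\right\}.
\end{equation*}
Discarding the non-negative first summand of the form identity and applying the Dirichlet eigenvalue inequality $\int_\omega(|\partial_2 u|^2+|\partial_3 u|^2)\,dt \geq E_1 \int_\omega|u|^2\,dt$ on each slice yields
\begin{equation*}
  Q[\psi] \geq E_1 \int_\Real \frac{\inf_{t\in\omega} h(s,t)}{\sup_{t\in\omega} h(s,t)} \int_\omega |\psi|^2\,h\,dt\,ds \geq E_1 \inf_{|s|>R}\frac{\inf_{t\in\omega} h(s,t)}{\sup_{t\in\omega} h(s,t)}\,\|\psi\|_\Hilbert^2.
\end{equation*}
Since~\eqref{Ass.decay} forces $\kappa(s) \to 0$, the uniform bounds~\eqref{1<G<1} applied on slices of large $|s|$ make this ratio tend to~$1$, so $\inf\sigma_\mathrm{ess}(H)\geq E_1$.

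The principal obstacle is the strong-convergence estimate $\|(H-\lambda)\psi_n\|_\Hilbert \to 0$ for the Weyl sequences. Besides the Euclidean pieces $-\partial_s^2$ and $-\Delta_D^\omega$ (on which the ansatz is tuned), $H$ contains perturbative contributions from off-diagonal entries of $G^{-1}$ and from derivatives of $h$, $\kappa$, and $\tau-\dot\theta$. The slow-spread scaling gives each $s$-derivative on $\chi_n$ an $O(n^{-1})$ factor, while~\eqref{Ass.decay} gives each non-Euclidean coefficient an $o(1)$ factor on $\supp\psi_n$; combining these forces every error term to vanish in the limit, but the bookkeeping is where the routine computational work lies.
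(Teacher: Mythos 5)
Your overall architecture matches the paper's: singular sequences localized at infinity for $[E_1,\infty)\subseteq\sigma_\mathrm{ess}(H)$, and a bracketing/Persson-type argument for the reverse inclusion. The lower bound is fine (the paper uses Neumann bracketing at $|s|=s_0$ plus the crude bound $G^{-1}\geq\diag(0,1,1)$, which is equivalent to what you do), and your completed-square identity for $Q$ is correct and useful.

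The gap is in the step you yourself flag as the ``principal obstacle'': the claim $\|(H-\lambda)\psi_n\|_\Hilbert\to 0$ cannot be closed under the hypotheses~\eqref{Ass.decay} alone. In strong form $H\psi=-h^{-1}\partial_i(h\,G^{ij}\partial_j\psi)$, and the first-order error terms carry the coefficients $\partial_1 h$, $\partial_1 h_2$, $\partial_1 h_3$, i.e.\ $\dot\kappa$ and $(\tau-\dot\theta)^{\textbf{.}}$. Your assertion that ``\eqref{Ass.decay} gives each non-Euclidean coefficient an $o(1)$ factor on $\supp\psi_n$'' is false for these coefficients: \eqref{Ass.decay} controls $\kappa$ and $\tau-\dot\theta$ but says nothing about their derivatives, which may stay bounded away from zero (or oscillate) while the functions themselves decay. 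Since these terms multiply $\partial_j\psi_n$, whose normalized $L^2$-size is $O(1)$, the resulting contribution to $\|(H-\lambda)\psi_n\|_\Hilbert$ need not vanish. Worse, under the standing regularity hypotheses ($\theta$ only $C^1$, so $h_2,h_3$ merely continuous in $s$) these derivatives need not exist, and membership $\psi_n\in\Dom(H)$ is itself unclear. This is exactly why the paper, after noting that your route works ``under additional assumptions about the decay of $\kappa$ and $\tau-\dot\theta$ at infinity (involving derivatives)'', instead measures $(H-\lambda)\psi_n$ in the dual norm of $[\Dom(Q)]^*$, where only the quadratic form --- hence only $\kappa$ and $\tau-\dot\theta$ themselves --- appears. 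To repair your proof you must either add those derivative-decay assumptions (weakening the theorem) or switch to a quadratic-form version of the Weyl criterion (the paper's dual-space trick, or \cite[Lem.~4.1]{DDI}).
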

\begin{proof}
The proof consists of two steps.

\smallskip
\noindent
1.~\underline{$\inf\sigma_\mathrm{ess}(H) \geq E_1$}.
We impose additional Neumann boundary conditions
on the tube cross-sections at $|s|=s_0>0$.
This leads to a direct sum of Laplacians
$H_\mathrm{int}^N$ and $H_\mathrm{ext}^N$ in the subsets
$I_\mathrm{int}\times\omega$
and $I_\mathrm{ext}\times\omega$, respectively,
where $I_\mathrm{int}:=(-s_0,s_0)$
and $I_\mathrm{ext}:=(-\infty,-s_0)\cup(s_0,\infty)$.
More precisely,
$H_\mathrm{int}^N$ is introduced as the operator in
$\Hilbert_\mathrm{int}:=\sii(I_\mathrm{int}\times\omega,G)$
associated with the quadratic form
$$
  Q_\mathrm{int}^N[\psi] := \left(
  \partial_i\psi, G^{ij} \, \partial_j\psi
  \right)_{\Hilbert_\mathrm{int}}
  , \quad
  \psi \in \Dom(Q_\mathrm{int}^N)
  :=
  \big\{
  \psi \!\upharpoonright\! (I_\mathrm{int}\times\omega)
  \ | \
  \psi \in \Dom(Q)
  \big\}
  \,,
$$
and similarly for $H_\mathrm{ext}^N$.
$H_\mathrm{int}^N$ is an operator with compact resolvent.
By the minimax principle, the essential spectrum of~$H$ is
estimated from below by the lowest point
in the essential spectrum of~$H_\mathrm{ext}^N$.
Using the crude bound
$
  G^{-1} \geq \diag(0,1,1)
$,
we get
\begin{align*}
  Q_\mathrm{ext}^N[\psi] &\geq \|\nabla'\psi\|_{\Hilbert_\mathrm{ext}}^2
  \geq (\inf h) \, \|\nabla'\psi\|_{\sii(I_\mathrm{ext}\times\omega)}^2
  \geq E_1 \, (\inf h) \, \|\psi\|_{\sii(I_\mathrm{ext}\times\omega)}^2
  \\
  &\geq E_1 \, \frac{\inf h}{\sup h} \, \|\psi\|_{\Hilbert_\mathrm{ext}}^2
\end{align*}
for all $\psi \in \Dom(Q_\mathrm{ext}^N)$,
where the infima and suprema are taken over $I_\mathrm{ext}\times\omega$,
and $\nabla':=(\partial_2,\partial_3)$.
Using the first of the hypotheses~\eqref{Ass.decay},
we see that the spectrum of~$H_\mathrm{ext}^N$
is estimated from below by~$E_1$ times a function of~$s_0$
tending to~$1$ as $s_0 \to \infty$.
We conclude with noticing that the essential spectrum is a closed set
and that~$s_0$ can be chosen arbitrarily large.

\smallskip
\noindent
2.~\underline{$\sigma(H) \supseteq [E_1,\infty)$}.
It suffices to construct for each $k \in [0,\infty)$
a sequence $\{\psi_n\}_{n=1}^\infty$ from $\Dom(H)$,
with elements normalized to~$1$ in~$\Hilbert$, such that
$[H-(k^2+E_1)]\psi_n \to 0$ in $\Hilbert$ as $n \to \infty$.
One is tempted to use a tensor product
of plane waves ``localized at infinity''
and the first Dirichlet eigenfunction~$\mathcal{J}_1$
in the cross-section.
Namely, let us set
$$
  \phi_n(s,t) := \varphi_n(s) \, e^{i k s} \, \mathcal{J}_1(t)
  \,,
$$
where $\varphi_n(s):=\varphi(n^{-1}s-n)$ with~$\varphi$ being
a non-zero $C^\infty$-smooth function with a compact support in $(-1,1)$.
Under additional assumptions
about the decay of~$\kappa$ and $\tau-\dot\theta$ at infinity
(involving derivatives),
it is indeed possible to show that
$\psi_n := \phi_n/\|\phi_n\|_\Hilbert$
form the desired sequence.

To avoid the additional assumptions, however,
we reconsider~$H$ as an operator with domain $\Dom(Q)$
in the Hilbert space
$[\Dom(Q)]^*=\dSobi(\Real\times\omega,G)$,
the topological dual of $\Dom(Q)$ equipped with the dual norm.
This is justified by the fact that
$H+1:\Dom(H)\to\Hilbert$ and $H+1:\Dom(Q)\to[\Dom(Q)]^*$ are isomorphisms.
The functions $\phi_n$ clearly belong to $\Dom(Q)$.
Let us show that~$\phi_n$ can be renormalized to~$1$ in $[\Dom(Q)]^*$, \ie,
\begin{equation}\label{check1}
  \lim_{n\to\infty} \|\phi_n\|_{[\Dom(Q)]^*} > 0
\end{equation}
and that the renormalized functions
$\xi_n := \phi_n/\|\phi_n\|_{[\Dom(Q)]^*}$ for large~$n$
form the desired sequence in $[\Dom(Q)]^*$, \ie,
\begin{equation}\label{check2}
  \lim_{n\to\infty} \big\| [H-(k^2+E_1)]\xi_n \big\|_{[\Dom(Q)]^*} = 0
  \,.
\end{equation}

It easy to verify that
$$
  | h - 1 | \leq a |\kappa| \,, \qquad
  | G^{-1} h -1 | \leq C \big( |\kappa|+|\tau-\dot\theta| \big) \, 1
  \,.
$$
Here the second inequality holds in the sense of matrices,
$1$~denotes the identity matrix,
and~$C$ is a positive constant depending on~$a$
and the supremum norms of~$\kappa$ and $\tau-\dot\theta$.
Consequently,
\begin{align*}
  \big|
  (\phi,\phi_n)_\Hilbert
  - (\phi,\phi_n)_{\sii(\Real\times\omega)}
  \big|
  &\leq C \, b_n \, \|\phi\|_{\sii(\Real\times\omega)} \,
  \|\phi_n\|_{\sii(\Real\times\omega)}
  \,,
  \\
  \big|
  Q(\phi,\phi_n) - (\nabla\phi,\nabla\phi_n)_{\sii(\Real\times\omega)}
  \big|
  &\leq C \, b_n \, \|\nabla\phi\|_{\sii(\Real\times\omega)} \,
  \|\nabla\phi_n\|_{\sii(\Real\times\omega)}
  \,,
\end{align*}
for any $\phi \in \Dom(Q)$,
where the constant~$C$ possibly differs from that above and
$$
  b_n := \sup_{\supp\varphi_n}\big( |\kappa|+|\tau-\dot\theta| \big)
  \xrightarrow[n\to\infty]{} 0
  \,.
$$
Integrating by parts, using the explicit expression
$$
  - \Delta \phi_n(s,t)
  = (k^2+E_1)\,\phi_n(s,t)
  -2ik \, \dot{\varphi}_n(s) \, e^{iks} \, \mathcal{J}_1(t)
  -\ddot{\varphi}_n(s) \, e^{iks} \, \mathcal{J}_1(t)
$$
and recalling that~$\mathcal{J}_1$ is normalized to~$1$ in $\sii(\omega)$,
we can estimate
\begin{multline*}
  \big|
  (\nabla\phi,\nabla\phi_n)_{\sii(\Real\times\omega)}
  - (k^2+E_1) \big(\phi,\phi_n\big)_{\sii(\Real\times\omega)}
  \big|
  \\
  \leq
  \|\phi\|_{\sii(\Real\times\omega)}
  \left(
  2 k \, \|\dot{\varphi}_n\|_{\sii(\Real)}
  + \|\ddot{\varphi}_n\|_{\sii(\Real)}
  \right)
\end{multline*}
for any $\phi \in \Dom(Q)$.
Finally,
$
  \|\phi_n\|_{\sii(\Real\times\omega)}^2
  = \|\varphi_n\|_{\sii(\Real)}^2
  = n \, \|\varphi\|_{\sii(\Real)}^2
$,
and
$$
  \frac{\|\dot{\varphi}_n\|_{\sii(\Real)}}{\|\varphi_n\|_{\sii(\Real)}}
  = \frac{1}{n} \frac{\|\dot{\varphi}\|_{\sii(\Real)}}{\|\varphi\|_{\sii(\Real)}}
  \xrightarrow[n\to\infty]{} 0
  \,, \qquad
  \frac{\|\ddot{\varphi}_n\|_{\sii(\Real)}}{\|\varphi_n\|_{\sii(\Real)}}
  = \frac{1}{n^2} \frac{\|\ddot{\varphi}\|_{\sii(\Real)}}{\|\varphi\|_{\sii(\Real)}}
  \xrightarrow[n\to\infty]{} 0
  \,.
$$

Using these preliminaries, \eqref{check1}~follows from
$$
  \|\phi_n\|_{[\Dom(Q)]^*}
  \equiv \sup_{\phi\in\Dom(Q)\setminus\{0\}}
  \frac{|(\phi,\phi_n)_\Hilbert|}{\|\phi\|_{\Dom(Q)}}
  \geq \frac{\|\phi_n\|_\Hilbert^2}{\,\|\phi_n\|_{\Dom(Q)}}
  = \frac{\|\phi_n\|_\Hilbert}{\sqrt{
  1+\frac{Q[\phi_n]}{\,\|\phi_n\|_\Hilbert^2}
  }}
  \xrightarrow[n\to\infty]{} \infty
  \,,
$$
while~\eqref{check2} is a consequence of the formula
\begin{align*}
  \big\| [H-(k^2+E_1)]\xi_n \big\|_{[\Dom(Q)]^*}
  &\equiv \sup_{\phi\in\Dom(Q)\setminus\{0\}}
  \frac{|Q(\phi,\xi_n)-(k^2+E_1)(\phi,\xi_n)_\Hilbert|}{\|\phi\|_{\Dom(Q)}}
\end{align*}
and some elementary estimates.
\end{proof}

While the Neumann bracketing in the first step
of the proof of Theorem~\ref{Thm.ess} is standard,
the idea in the second step is remarkable.
It enables one to study spectral properties of~$H$
by working with its quadratic form only.
In the context of Theorem~\ref{Thm.ess},
it is an alternative to the Weyl-type characterization
of essential spectrum adapted to the quadratic-form setting
by Iftimie \etal\/ \cite[Lem.~4.1]{DDI}.

For three-dimensional tubes of cross-section
being a disc centered at the reference curve,
satisfying additional assumptions about the decay of curvature at infinity,
Theorem~\ref{Thm.ess} has been proved previously
by Goldstone and Jaffe~\cite{GJ} (compactly supported~$\kappa$)
and by Duclos and Exner~\cite{DE} (additional vanishing
of~$\dot\kappa$ and $\ddot{\kappa}$ at infinity).
Of course, if~$\omega$ is rotationally invariant with respect to the origin,
we can choose the Tang frame
as the moving frame along~$\Gamma$ without loss of generality.
Consequently, the second hypothesis of~\eqref{Ass.decay}
is superfluous in these situations.

In the general case, however,
both the assumptions~\eqref{Ass.decay} are important.
For instance, if the tube is periodically bent but not twisted
(respectively periodically twisted but not bent),
it follows from Theorem~\ref{Thm.bending}
(respectively Corollary~\ref{Cor.uniform}) below
that the essential spectrum starts strictly below
(respectively above) the energy~$E_1$.

For non-twisted tubes of general cross-section (and of arbitrary dimension),
assuming just the vanishing of~$\kappa$ at infinity,
Theorem~\ref{Thm.ess} was proved for the first time in~\cite{ChDFK}
using the ideas of~\cite[Sec.~4]{DDI}.
The proof is different from the present one
but easily adaptable to the twisted case too.

\section{The effect of bending}\label{Sec.bending}
%
It turns out that bending acts as an attractive interaction
in the sense that it gives rise to a spectrum below the energy~$E_1$.
\begin{Theorem}\label{Thm.bending}
Let $\kappa \not= 0$ and $\tau-\dot\theta = 0$.
Then
$$
  \inf\sigma(-\Delta_D^\Omega) < E_1
  \,.
$$
\end{Theorem}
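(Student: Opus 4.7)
The plan is a direct variational trial-function argument: exhibit $\psi \in \Dom(Q)$ with $Q[\psi] < E_1 \|\psi\|_\Hilbert^2$, which suffices thanks to $-\Delta_D^\Omega \cong H$. The hypothesis $\tau - \dot\theta = 0$ makes the metric~\eqref{metric} diagonal, $G = \diag(h^2,1,1)$, so
$$
  Q[\psi] = \int_{\Real\times\omega} \left( \frac{|\partial_s \psi|^2}{h^2} + |\nabla' \psi|^2 \right) h \, ds \, dt \,, \qquad \nabla' := (\partial_2,\partial_3) \,,
$$
with $h(s,t) = 1 - \kappa(s) \, c(s,t)$ and $c(s,t) := t_2 \cos\theta(s) + t_3 \sin\theta(s)$. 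I would use the two-parameter trial function
$$
  \psi_{n,\epsilon}(s,t) := \varphi_n(s)\,\mathcal{J}_1(t) + \epsilon\,\eta(s)\,v(t) \,,
$$
where $\varphi_n(s) := \varphi_0(s/n)$ for a fixed $\varphi_0 \in C_0^\infty(\Real)$ equal to $1$ on $[-1,1]$, $\eta \in C_0^\infty(\Real)$ is localized near a point where $\kappa$ does not vanish, and $v \in \sobi(\omega)$ will be chosen below. Expanding
$$
  Q[\psi_{n,\epsilon}] - E_1 \|\psi_{n,\epsilon}\|_\Hilbert^2 = A_n + 2\epsilon B_n + \epsilon^2 C \,,
$$
the task reduces to verifying $A_n \to 0$, $B_n \to B_\infty \neq 0$, and $C$ finite and independent of $n$, so that the quadratic in $\epsilon$ attains negative values.

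For the diagonal term,
$$
  A_n = \int \frac{|\dot\varphi_n|^2}{h}\,\mathcal{J}_1^2 \, ds \, dt + \int \varphi_n^2 \left(|\nabla' \mathcal{J}_1|^2 - E_1 \mathcal{J}_1^2\right) h \, ds \, dt \,.
$$
The second integral vanishes \emph{exactly}: the $t$-constant part of $h$ contributes zero via the weak equation $-\Delta' \mathcal{J}_1 = E_1 \mathcal{J}_1$ tested against $\mathcal{J}_1$; the linear part $-\kappa c$ gives, after testing the same equation against $c\mathcal{J}_1 \in \sobi(\omega)$, a term proportional to $\int_\omega \mathcal{J}_1 \, \nabla' c \cdot \nabla' \mathcal{J}_1 \, dt = \tfrac{1}{2} \, \nabla' c \cdot \int_\omega \nabla'(\mathcal{J}_1^2) \, dt = 0$, the second equality using that $\nabla' c$ is constant in $t$ and $\mathcal{J}_1^2 \in W^{1,1}_0(\omega)$. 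The first integral is $O(n^{-1})$ by the dilation of $\varphi_n$, so $A_n \to 0$. The analogous manipulation applied to the cross term, together with $\dot\varphi_n \equiv 0$ on $\supp \eta$ for $n$ large, yields
$$
  B_n \xrightarrow[n\to\infty]{} B_\infty = \int \eta(s)\,\kappa(s) \left[ \cos\theta(s) \int_\omega v\,\partial_2 \mathcal{J}_1 \, dt + \sin\theta(s) \int_\omega v\,\partial_3 \mathcal{J}_1 \, dt \right] ds \,.
$$
Taking $v(t) := t_2 \mathcal{J}_1(t) \in \sobi(\omega)$ and integrating by parts in $t$ gives $\int_\omega v\,\partial_2 \mathcal{J}_1 \, dt = -\tfrac{1}{2}$ and $\int_\omega v\,\partial_3 \mathcal{J}_1 \, dt = 0$. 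Since $\kappa \not\equiv 0$, we can pick $s_0$ with $\kappa(s_0) \neq 0$; since $\cos^2\theta + \sin^2\theta \equiv 1$, at least one of $\cos\theta(s_0), \sin\theta(s_0)$ is non-zero, and by continuity $\eta \geq 0$ can be localized on a small interval around $s_0$ on which $\kappa \cos\theta$ (or $\kappa \sin\theta$, with $v = t_3 \mathcal{J}_1$ instead) has constant sign, forcing $B_\infty \neq 0$. Since $\eta v \in \Dom(Q)$, $C$ is a fixed finite number, hence for $n$ sufficiently large the polynomial $A_n + 2\epsilon B_n + \epsilon^2 C$ takes negative values in $\epsilon$.

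The main obstacle, and the crux of the argument, is the exact cancellation responsible for $A_n \to 0$. It rests simultaneously on three ingredients: (i) the diagonal form of $G$ (the off-diagonal $h_2, h_3$ generated by twisting $\tau - \dot\theta \neq 0$ would not be annihilated by these manipulations, and in fact they drive the Hardy inequality of Section~\ref{Sec.twisting}), (ii) the affine dependence of $c$ on $t$, so that $\nabla' c$ is constant, and (iii) the Dirichlet condition on $\partial \omega$, which kills boundary contributions in the $\omega$-integrations by parts. Without any one of these, the second integral in $A_n$ would carry an $O(1)$ contribution of uncontrolled sign and the construction would collapse. The remaining steps — the non-degeneracy of $B_\infty$ and the elementary one-variable optimization — are routine in comparison.
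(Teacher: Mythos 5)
Your argument is correct and is essentially the paper's own proof (the Goldstone--Jaffe variational scheme, carried out in \cite{ChDFK}): the quasi-mode $\varphi_n\mathcal{J}_1$ with $Q_1[\varphi_n\mathcal{J}_1]\to0$ plus a localized transverse perturbation whose cross term with the quasi-mode is of fixed sign. The only cosmetic difference is your separated perturbation $\eta(s)\,t_2\,\mathcal{J}_1(t)$ in place of the paper's $\xi(s)\,[t_2\cos\theta(s)+t_3\sin\theta(s)]\,\mathcal{J}_1(t)$, which costs you the extra case distinction on the sign of $\cos\theta(s_0)$ versus $\sin\theta(s_0)$ but changes nothing of substance.
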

\begin{proof}
The proof is variational, based on the fact that
the Rayleigh quotient of the operator $H-E_1$
can be made negative for a trial function
built from $(s,t)\mapsto\mathcal{J}_1(t)$,
a generalized eigenfunction of~$H$ corresponding to~$E_1$.

\smallskip
\noindent
1.~First one verifies that
$$
  Q_1[\psi_n] := Q[\psi_n] - E_1 \|\psi_n\|_\Hilbert^2
  \longrightarrow 0
$$
as $n\to\infty$
for the choice $\psi_n(s,t):=\varphi_n(s)\mathcal{J}_1(t)$,
where $\varphi_n:\Real \to [0,1]$ form a sequence
of functions from $\Sobi(\Real)$ such that
$\varphi_n(s) \to 1$ for a.e.\ $s\in\Real$
and $\|\dot\varphi_n\|_{\sii(\Real)} \to 0$ as $n \to \infty$.

\smallskip
\noindent
2.~Then one shows that adding a small perturbation $\phi$ to~$\psi_n$,
the form $Q_1[\psi_n+\phi]$ can be made negative for large~$n$.
One can take, for instance,
$$
  \phi(s,t) :=
  \eps \, \xi(s) \, \left[t_2\cos\theta(s)+t_3\sin\theta(s)\right] \,
  \mathcal{J}_1(t)
  \,,
$$
where~$\eps$ is a real number of suitable sign
and $\xi \in \Sobi(\Real)\setminus\{0\}$ is a non-negative function
with a compact support contained in an interval where~$\kappa$
is not zero and does not change sign.

\smallskip
\noindent
We refer to~\cite[Sec.~3.2]{ChDFK} for more details.
\end{proof}

The original idea of the proof to build a test function from
the generalized eigenfunction corresponding
to the threshold of the essential spectrum~$E_1$
belongs to Goldstone and Jaffe~\cite{GJ}.
In their paper Theorem~\ref{Thm.bending} was proved,
under the additional assumption that~$\kappa$ is compactly supported,
for tubes of cross-section being a disc centered at the reference curve.
Duclos and Exner~\cite{DE} made the proof of~\cite{GJ} rigorous
and relaxed the condition about the compact support of~$\kappa$,
however, technical assumptions about the local behaviour
of~$\dot\kappa$ and $\ddot{\kappa}$ had to be imposed.
The generalization to tubes of general cross-sections
(and of arbitrary dimension) was made in~\cite{ChDFK},
$\kappa \not= 0$ and $\tau-\dot\theta = 0$
being the only important assumptions.

As a consequence, we get
\begin{Corollary}
Under the hypotheses of Theorems~\ref{Thm.ess} and~\ref{Thm.bending},
$$
  \sigma_\mathrm{disc}(-\Delta_D^\Omega) \not= \varnothing
  \,.
$$
\end{Corollary}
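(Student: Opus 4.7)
The plan is to combine Theorem~\ref{Thm.ess} and Theorem~\ref{Thm.bending} directly, exploiting the general decomposition of the spectrum of a self-adjoint operator into its essential and discrete parts, namely $\sigma(-\Delta_D^\Omega) = \sigma_\mathrm{ess}(-\Delta_D^\Omega) \cup \sigma_\mathrm{disc}(-\Delta_D^\Omega)$, where $\sigma_\mathrm{disc}$ collects the isolated eigenvalues of finite multiplicity and coincides with $\sigma \setminus \sigma_\mathrm{ess}$.

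First I would invoke Theorem~\ref{Thm.ess} to pin down $\sigma_\mathrm{ess}(-\Delta_D^\Omega) = [E_1,\infty)$. In particular, the half-line $(-\infty,E_1)$ contains no point of the essential spectrum. Next, Theorem~\ref{Thm.bending} provides the strict inequality $\inf\sigma(-\Delta_D^\Omega) < E_1$, so the total spectrum has non-empty intersection with $(-\infty,E_1)$. Any element of this intersection then lies in the complement of $\sigma_\mathrm{ess}(-\Delta_D^\Omega)$ inside $\sigma(-\Delta_D^\Omega)$, hence belongs to $\sigma_\mathrm{disc}(-\Delta_D^\Omega)$. This immediately yields $\sigma_\mathrm{disc}(-\Delta_D^\Omega) \neq \varnothing$.

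There is no genuine obstacle: the corollary is a one-line consequence of the two theorems already established, since the two assumptions $\inf\sigma < E_1$ and $\inf\sigma_\mathrm{ess} = E_1$ mechanically force the existence of a spectral point below the essential spectrum, which by the standard decomposition for self-adjoint operators is necessarily an isolated eigenvalue of finite multiplicity. The only (minor) point worth mentioning explicitly is that the infimum of $\sigma(-\Delta_D^\Omega)$ is indeed attained as such an eigenvalue, which follows from the bounded-below self-adjointness of $-\Delta_D^\Omega$ combined with the spectral gap between $\inf\sigma$ and $\inf\sigma_\mathrm{ess}$.
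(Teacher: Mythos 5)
Your argument is correct and is precisely the (unstated) reasoning behind the paper's "As a consequence, we get": Theorem~\ref{Thm.ess} places the essential spectrum at $[E_1,\infty)$, Theorem~\ref{Thm.bending} forces spectrum strictly below $E_1$, and since the operator is self-adjoint, non-negative and its spectrum is closed, the infimum is attained and lies in $\sigma\setminus\sigma_\mathrm{ess}=\sigma_\mathrm{disc}$. Nothing further is needed.
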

%

\section{The effect of twisting}\label{Sec.twisting}
%
Now we come to the most recent results in the theory of quantum waveguides.
It turns out that the effect of twisting is quite opposite
to that of bending: it acts rather as a repulsive interaction.
This statement should be understood in a vague sense that
twisting tends to rise the spectral threshold of $-\Delta_D^\Omega$.
However, a more subtle approach is necessary
to analyse the effect of twisting rigorously,
in particular if the tube is asymptotically straight~\eqref{Ass.decay}.
By analogy with Schr\"odinger operators,
the technique of Hardy inequalities seems to be adequate here.

Throughout this section we assume that the tube~$\Omega$
is not bent, \ie~$\kappa=0$
(then~$\Gamma$ is a straight line and we also have $\tau=0$).
Under this condition,
the Hilbert space~\eqref{Hilbert} reduces to $\sii(\Real\times\omega)$.
The quadratic form~\eqref{form} can be identified with
\begin{align*}
  Q_{\alpha}^I[\psi] &:=
  \|\partial_1\psi-\alpha\,\partial_u\psi\|_{\sii(I\times\omega)}^2
  + \|\nabla'\psi\|_{\sii(I\times\omega)}^2
  \,,
  \\
  \psi \in \Dom(Q_{\alpha}^I) &:=
  \big\{
  \psi\!\upharpoonright\! (I\times\omega)
  \ | \ \psi\in\sobi(\Real\times\omega)
  \big\}
  \,,
\end{align*}
with the choice $\alpha=\dot{\theta}$ and $I=\Real$.
Here $\nabla':=(\partial_2,\partial_3)$
and $\partial_u$ denotes the transverse angular-derivative operator
$$
  \partial_u := u \cdot \nabla'
  \qquad\mbox{with}\qquad
  u(t):=(t_3,-t_2)
$$
and the dot being the scalar product in~$\Real^2$.
We proceed in a greater generality by assuming
that $\alpha:I\to\Real$ is an arbitrary bounded function
(we denote by the same letter the function $\alpha \otimes 1$
on $I\times\omega$)
and that $I\subseteq\Real$ is an arbitrary open interval.
Let~$H_\alpha^I$ be the self-adjoint operator in $\sii(I\times\omega)$
associated with~$Q_\alpha^I$.

\subsection{A Poincar\'e-type inequality}
%
Let $\lambda(\alpha,I)$ denote the spectral threshold of
the shifted operator $H_\alpha^I-E_1$,
\ie\ the lowest point in its spectrum.
If the interval~$I$ is bounded, then the spectrum of $H_\alpha^I$
is purely discrete and $\lambda(\alpha,I)$
is just the first eigenvalue of $H_\alpha^I-E_1$.
In any case, we have the following variational characterization:
\begin{equation}\label{lambda}
  \lambda(\alpha,I)
  = \inf_{\psi \in \Dom(Q_{\alpha}^I) \setminus\{0\}}
  \frac{\, Q_\alpha^I[\psi]
  - E_1 \;\! \|\psi\|_{\sii(I\times\omega)}^2}
  {\|\psi\|_{\sii(I\times\omega)}^2}
  \,.
\end{equation}

It follows immediately from the Poincar\'e-type inequality
in the cross-section
\begin{equation}\label{Poincare}
  \|\nabla f\|_{\sii(\omega)}^2
  \, \geq \,
  E_1 \;\! \|f\|_{\sii(\omega)}^2
  \,, \qquad
  \forall f\in\sobi(\omega)
  \,,
\end{equation}
and Fubini's theorem
that $\lambda(\alpha,I)$ is non-negative.
In this subsection we establish a stronger, positivity result
provided that the twisting is effective in the following sense:
\begin{Lemma}\label{Lem.cornerstone}
Let $I \subset \Real$ be a bounded open interval.
Let~$\omega$ be not rotationally invariant with respect to the origin.
Let $\alpha \in L^\infty(I)$ be a non-trivial
(\ie, $\alpha\not=0$ on a subset of~$I$ of positive measure)
real-valued function.
Then
$$
  \lambda(\alpha,I) \geq \lambda_0
  \,,
$$
where $\lambda_0$ is a positive constant
depending on~$\|\alpha\|_{\sii(I)}$ and~$\omega$.
\end{Lemma}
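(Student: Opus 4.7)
The plan is to combine the spectral gap in the cross-section, a divergence-free observation for the transverse angular vector field~$u$, and the compactness of the embedding $\Dom(Q_\alpha^I) \hookrightarrow \sii(I\times\omega)$ afforded by the boundedness of~$I$. First I would decompose $\psi(s,\cdot) = \varphi(s)\,\mathcal{J}_1 + \chi(s,\cdot)$ orthogonally in $\sii(\omega)$ for a.e.~$s$, so that $\chi(s,\cdot)\perp\mathcal{J}_1$. Since $\chi(s,\cdot)\in\sobi(\omega)$ (both $\psi$ and $\mathcal{J}_1$ have vanishing trace on $\partial\omega$), the transverse piece of the shifted form splits cleanly,
$$
  \|\nabla'\psi\|_{\sii(I\times\omega)}^2 - E_1\|\psi\|_{\sii(I\times\omega)}^2
  = \|\nabla'\chi\|^2 - E_1\|\chi\|^2 \geq (E_2-E_1)\,\|\chi\|^2,
$$
where $E_2$ denotes the second Dirichlet eigenvalue of~$\omega$. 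The longitudinal piece $\|\partial_1\psi-\alpha\partial_u\psi\|^2$ does not separate as cleanly, because the projections of $\partial_u\chi$ onto $\mathcal{J}_1$ couple the two parts; I would therefore handle it by a limiting argument rather than term by term.

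The key geometric input is that, whenever $\omega$ is not rotationally invariant with respect to the origin, $\mathcal{J}_1$ and $\partial_u\mathcal{J}_1$ are linearly independent in $\sii(\omega)$. Integration by parts, using $\nabla'\cdot u = 0$ and the Dirichlet condition for $\mathcal{J}_1$, yields $\langle\mathcal{J}_1,\partial_u\mathcal{J}_1\rangle_{\sii(\omega)} = 0$, so the pair is either linearly independent or $\partial_u\mathcal{J}_1\equiv 0$. The latter would mean $\mathcal{J}_1$ is invariant under rotations about the origin, and since $\mathcal{J}_1$ is positive inside~$\omega$ and vanishes on $\partial\omega$, this would force $\omega$ itself to be rotationally invariant, contradicting the hypothesis.

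Now suppose for contradiction that $\lambda(\alpha,I) = 0$; the non-negativity $\lambda(\alpha,I)\geq 0$ is immediate from \eqref{Poincare}. A normalized minimizing sequence $\{\psi_n\}\subset\Dom(Q_\alpha^I)$ is bounded in $\Sobi(I\times\omega)$ (the full $\sobi$-bound follows from the two non-negative parts of $Q_\alpha^I$ and from $\alpha\in L^\infty$), so Rellich--Kondrachov produces a subsequence converging weakly in $\Sobi$ and strongly in $\sii$ to a minimizer $\psi_*$ with $\|\psi_*\|_{\sii(I\times\omega)}=1$. Weak lower semicontinuity forces each of the two non-negative parts of $Q_\alpha^I[\,\cdot\,] - E_1\|\cdot\|^2$ to vanish at $\psi_*$. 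The transverse part yields $\chi_*=0$ and therefore $\psi_*(s,t)=\varphi(s)\mathcal{J}_1(t)$ with $\|\varphi\|_{\sii(I)}=1$; substituting into the longitudinal part yields
$$
  \dot\varphi(s)\,\mathcal{J}_1(t) \;=\; \alpha(s)\,\varphi(s)\,\partial_u\mathcal{J}_1(t)
  \qquad\text{a.e.},
$$
and the linear independence established above forces $\dot\varphi = 0$ and $\alpha\,\varphi = 0$ a.e.\ on~$I$. The first makes $\varphi$ a non-zero constant, the second then makes $\alpha\equiv 0$, contradicting the non-triviality of~$\alpha$.

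The main obstacle is that this argument is purely qualitative -- it delivers $\lambda(\alpha,I)>0$ but no numerical value. To extract the quantitative dependence $\lambda_0=\lambda_0(\|\alpha\|_{\sii(I)},\omega)$ asserted in the statement, I would either rerun the compactness argument uniformly over families of data with prescribed $\sii(I)$-norm (extracting a weak limit of the $\alpha_n$'s as well), or restrict the trial space to the separated ansatz $\chi\equiv 0$, reducing to the first Neumann eigenvalue on~$I$ of the one-dimensional Schr\"odinger operator $-\partial_s^2 + c_\omega\,\alpha^2$ with $c_\omega:=\|\partial_u\mathcal{J}_1\|_{\sii(\omega)}^2 > 0$, and then absorbing the residual $\chi$-cross terms appearing in $\|\partial_1\psi-\alpha\partial_u\psi\|^2$ by Young's inequality against the spectral gap contribution $(E_2-E_1)\|\chi\|^2$.
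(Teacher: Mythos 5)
Your argument is essentially the paper's own proof: assume $\lambda(\alpha,I)=0$, use boundedness of $I$ to obtain a minimizer, force the transverse part to vanish so that $\psi=\varphi\otimes\mathcal{J}_1$, and then exploit the orthogonality $(\mathcal{J}_1,\partial_u\mathcal{J}_1)_{\sii(\omega)}=0$ together with $\partial_u\mathcal{J}_1\not\equiv 0$ (equivalent to non-rotational-invariance of $\omega$) to contradict the non-triviality of $\alpha$. The quantitative dependence of $\lambda_0$ on $\|\alpha\|_{\sii(I)}$ that you flag as the remaining obstacle is likewise not extracted in the paper's proof, which also only establishes positivity by contradiction, so on that point you are if anything more candid than the source.
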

\begin{proof}
We proceed by contradiction and assume that $\lambda(\alpha,I)=0$.
Since the spectrum of $H_\alpha^I$ is purely discrete,
the infimum in~\eqref{lambda} is attained
by a (smooth) function $\psi \in \Dom(Q_\alpha^I)$ satisfying
(recall~\eqref{Poincare})
\begin{equation}\label{2eqs}
  \|\partial_1\psi-\alpha\,\partial_u\psi\|_{\sii(I\times\omega)}^2
  = 0
  \qquad\mbox{and}\qquad
  \|\nabla'\psi\|_{\sii(I\times\omega)}^2
  - E_1 \;\! \|\psi\|_{\sii(I\times\omega)}^2
  = 0
  \,.
\end{equation}
Writing $\psi(s,t) = \varphi(s)\mathcal{J}_1(t) + \phi(s,t)$,
where~$\mathcal{J}_1$ is the positive eigenfunction of $-\Delta_D^\omega$
corresponding to~$E_1$ and
$
  (\mathcal{J}_1,\phi(s,\cdot))_{\sii(\omega)} = 0
$
for every $s \in I$,
we deduce from the second equality in~\eqref{2eqs} that $\phi=0$.
The first identity in~\eqref{2eqs} is then equivalent to
$$
  \|\dot\varphi\|_{\sii(I)}^2 \|\mathcal{J}_1\|_{\sii(\omega)}^2
  + \|\alpha \varphi\|_{\sii(I)}^2 \|\partial_u\mathcal{J}_1\|_{\sii(\omega)}^2
  - 2 (\mathcal{J}_1,\partial_u\mathcal{J}_1)_{\sii(\omega)}
  \Re (\dot\varphi,\alpha\varphi)_{\sii(I)}
  = 0
  \,.
$$
Since $(\mathcal{J}_1,\partial_u\mathcal{J}_1)_{\sii(\omega)}=0$
by an integration by parts, it follows that~$\varphi$ must be constant
and that
$$
  \|\alpha\|_{\sii(I)} = 0
  \qquad\mbox{or}\qquad
  \|\partial_u\mathcal{J}_1\|_{\sii(\omega)} = 0
  \,.
$$
However, this is impossible under the stated assumptions because
$\|\alpha\|_{\sii(I)}$ vanishes if and only if
$\alpha=0$ almost everywhere in~$I$,
and $\partial_u\mathcal{J}_1 = 0$ identically in~$\omega$
if and only if $\omega$~is rotationally invariant with respect to the origin.
\end{proof}
\begin{Remark}[An upper bound to the spectral threshold]
Irrespectively of whether the tube is twisted or not,
we always have the upper bound
\begin{equation}\label{upper}
  \lambda(\alpha,I) \leq \lambda\big(\|\alpha\|_{L^\infty(I)}\big)
  \,,
\end{equation}
where $\lambda(\alpha_0)$ denotes the first eigenvalue
of the operator~$B_{\alpha_0}$ in $\sii(\omega)$
associated with the quadratic form
$$
  b_{\alpha_0}[f] :=
  \|\nabla f\|_{\sii(\omega)}^2
  - E_1 \;\! \|f\|_{\sii(\omega)}^2
  + \alpha_0^2 \, \|
  \partial_u f
  \|_{\sii(\omega)}^2
  \,, \quad
  f\in\Dom(b_{\alpha_0}) := \sobi(\omega)
  \,.
$$
(Here $\partial_u$ is understood as a differential expression in~$\omega$.)
This can be seen easily by using the eigenfunction~$f_{\alpha_0}$ of~$B_{\alpha_0}$
corresponding to $\lambda(\alpha_0)$ with $\alpha_0 := \|\alpha\|_{L^\infty(I)}$
as a test function for $H_\alpha^I - E_1$.
More precisely, for any~$I$,
set $\psi(s,t):=\varphi_n(s)f_{\alpha_0}(t)$,
where~$\varphi_n$ is the restriction to~$I$ of the function
from the proof of Theorem~\ref{Thm.bending}.
Putting~$\psi$ into the Rayleigh quotient of~\eqref{lambda},
estimating
\begin{equation*}
  \|\partial_1\psi-\alpha\,\partial_u\psi\|_{\sii(I\times\omega)}^2
  \leq
  \left(
  \|\partial_1\psi\|_{\sii(I\times\omega)}
  + \alpha_0 \|\partial_u\psi\|_{\sii(I\times\omega)}
  \right)^2
\end{equation*}
and sending~$n$ to infinity, we conclude with~\eqref{upper}.

Note that $\lambda(\alpha_0)$ is non-negative due to~\eqref{Poincare}.
In fact, $\lambda(\alpha_0)=0$ if and only if $\alpha_0 = 0$
or~$\omega$ is rotationally invariant with respect to the origin.
(To show the positivity one can proceed
as in the proof of Lemma~\ref{Lem.cornerstone},
while the converse implication readily follows
by using~$\mathcal{J}_1$ as test function for~$B_{\alpha_0}$).
Consequently, we see that the hypotheses of Lemma~\ref{Lem.cornerstone}
represent also a necessary condition for the positivity of
$\lambda(\alpha,I)$.
\end{Remark}
\begin{Remark}
Note carefully that only bounded tubes
are allowed in Lemma~\ref{Lem.cornerstone}.
For instance, if $I=\Real$ then for any $\alpha$~vanishing at infinity,
$\lambda(\alpha,I)=0$ (\cf~Theorem~\ref{Thm.ess}).
\end{Remark}
\begin{Remark}[Periodically twisted tubes]\label{Rem.global}
As an example of unbounded tubes for which Lemma~\ref{Lem.cornerstone} still holds,
let us consider the case $I=\Real$ and $\alpha(s)=\alpha_0$ for a.e.\ $s\in\Real$.
Then there is an equality in~\eqref{upper},
\ie, $\lambda(\alpha_0,\Real) = \lambda(\alpha_0)$.
We prove it as follows.
Let~$\psi$ be any test function from $C_0^\infty(\Real\times\omega)$,
a dense subspace of $\Dom(Q_{\alpha_0}^\Real)$.
We employ the decomposition
$$
  \psi(s,t) = \phi(s,t) \, f_{\alpha_0}(t) \,,
  \qquad
  (s,t) \in \Real\times\omega \,,
$$
where~$f_{\alpha_0}$ is the eigenfunction of~$B_{\alpha_0}$
corresponding to $\lambda(\alpha_0)$
(we shall denote by the same letter
the function $1\otimes f_{\alpha_0}$ on $\Real \times \omega$)
and~$\phi$ is a function from $C_0^\infty(\Real\times\omega)$
actually introduced by this decomposition.
Then
\begin{multline*}
  Q_{\alpha_0}^\Real[\psi] =
  \lambda(\alpha_0) \;\! \|\psi\|_{\sii(\Real\times\omega)}^2
  - 2 \alpha_0 \, \Re\big(
  (\partial_1\phi)f_{\alpha_0},\phi\,\partial_u f_{\alpha_0}
  \big)_{\sii(\Real\times\omega)}
  \\
  + \|(\nabla'\phi)f_{\alpha_0}\|_{\sii(\Real\times\omega)}^2
  + \|
  (\partial_1\phi - \alpha_0\,\partial_u\phi)f_{\alpha_0}
  \|_{\sii(\Real\times\omega)}^2
  \,.
\end{multline*}
Neglecting the positive terms in the second line
and noticing that the mixed term is actually
equal to zero by an integration by parts,
we thus get $\lambda(\alpha_0,I) \geq \lambda(\alpha_0)$.
This together with~\eqref{upper} proves the desired equality.
An alternative proof,
based on a Floquet-type decomposition of~$H_{\alpha_0}^\Real$,
can be found in~\cite{EKov_2005}.

Note that the present proof can be readily adapted to show that
$\lambda^D(\alpha_0,I) \geq \lambda(\alpha_0)$
for any interval~$I$, where $\lambda^D(\alpha_0,I)$ is
the first eigenvalue of the operator in $\sii(I\times\omega)$
associated with the quadratic form which acts
in the same way as $Q_{\alpha_0}^I-E_1$
but has a smaller domain $\sobi(I\times\omega)$.
\end{Remark}
%

\subsection{Local and global Hardy inequalities}
%
Lemma~\ref{Lem.cornerstone} is the cornerstone of our method
to establish the existence of Hardy inequalities in twisted tubes.
For instance, the following Theorem gives a non-trivial inequality
provided that the hypotheses of Lemma~\ref{Lem.cornerstone} hold
on a subinterval of~$I$.
\begin{Theorem}\label{Thm.Hardy}
Let $\alpha \in L^\infty(I)$ be real-valued
and $I\subseteq\Real$ an open interval.
Let $\{I_j\}_{j \in K}$ be any collection of disjoint open
subintervals of~$I$, $K \subseteq \Nat$.
Then
\begin{equation}\label{Hardy}
  H_\alpha^I - E_1
  \ \geq \
  \sum_{j \in K} \,\lambda(\alpha,I_j) \, 1_{I_j}
\end{equation}
in the sense of quadratic forms.
Here $1_{I_j}$ denotes the operator of multiplication
by the characteristic function of $I_j\times\omega$.
\end{Theorem}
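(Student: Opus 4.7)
The plan is to establish~\eqref{Hardy} at the level of quadratic forms by a localization argument based on the disjoint decomposition of $I$ into the open subintervals $I_j$, $j \in K$, together with the complementary measurable set $I_0 := I \setminus \bigcup_{j} \overline{I_j}$. For any test function $\psi \in \Dom(Q_\alpha^I)$, additivity of the integral over disjoint regions yields
$$
  Q_\alpha^I[\psi] - E_1 \|\psi\|_{\sii(I\times\omega)}^2
  = \sum_{j \in K} \Bigl(Q_\alpha^{I_j}[\psi_j] - E_1 \|\psi_j\|_{\sii(I_j\times\omega)}^2\Bigr)
  + R(\psi) \,,
$$
where $\psi_j := \psi \upharpoonright (I_j \times \omega)$ and $R(\psi)$ denotes the corresponding quantity with $I_j$ replaced by $I_0$.

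The main step is to bound each summand via the variational characterization~\eqref{lambda}. Since $\psi \in \Dom(Q_\alpha^I)$ is by definition the restriction of an element of $\sobi(\Real\times\omega)$, its further restriction $\psi_j$ lies automatically in $\Dom(Q_\alpha^{I_j})$, and~\eqref{lambda} immediately gives
$$
  Q_\alpha^{I_j}[\psi_j] - E_1 \|\psi_j\|_{\sii(I_j\times\omega)}^2
  \geq \lambda(\alpha, I_j) \, \|\psi_j\|_{\sii(I_j\times\omega)}^2
  \,.
$$
No analogous Hardy-type lower bound can be expected for the remainder $R(\psi)$, since $I_0$ may be unbounded or may carry only a trivial portion of $\alpha$; it therefore suffices to show that $R(\psi) \geq 0$, which follows by discarding the non-negative kinetic contribution $\|\partial_1\psi - \alpha\,\partial_u\psi\|_{\sii(I_0\times\omega)}^2$ and applying the transverse Poincar\'e inequality~\eqref{Poincare} slice-wise via Fubini's theorem:
$$
  R(\psi) \geq \int_{I_0\times\omega} \bigl( |\nabla'\psi|^2 - E_1 |\psi|^2 \bigr) \, ds\,dt \geq 0
  \,.
$$

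Summing the per-interval lower bounds and adding the non-negative remainder exhibits the right-hand side as the quadratic form of the multiplication operator $\sum_{j \in K} \lambda(\alpha,I_j)\,1_{I_j}$ evaluated at $\psi$, which is exactly~\eqref{Hardy}. No step presents a genuine obstacle; the entire non-trivial content of the theorem is already encoded in Lemma~\ref{Lem.cornerstone}, and the present argument merely localizes that positivity to the subintervals $I_j$ while using~\eqref{Poincare} to control the complementary region $I_0$.
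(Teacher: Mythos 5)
Your proof is correct and follows essentially the same route as the paper: split the form over the disjoint subintervals, control the complementary region by discarding the longitudinal term and applying the transverse Poincar\'e inequality \eqref{Poincare} slice-wise via Fubini, and then invoke the variational characterization \eqref{lambda} on each $I_j$ after noting that restrictions of elements of $\Dom(Q_\alpha^I)$ lie in $\Dom(Q_\alpha^{I_j})$. The paper merely compresses your explicit remainder $R(\psi)\geq 0$ into its first displayed inequality.
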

\begin{proof}
For every $\psi \in \Dom(Q_{\alpha}^I)$, we have

\begin{equation*}
  Q_\alpha^{I}[\psi] - E_1 \;\! \|\psi\|_{\sii(I\times\omega)}^2
  \geq \!
  \sum_{j \in K} \left[
  Q_\alpha^{I_j}[\psi] - E_1 \;\! \|\psi\|_{\sii(I_j\times\omega)}^2
  \right]
  \geq \!
  \sum_{j \in K} \lambda(\alpha,I_j)
  \;\! \|\psi\|_{\sii(I_j\times\omega)}^2
  .
\end{equation*}
Here the first inequality follows by~\eqref{Poincare}
with help of Fubini's theorem
and by the inclusion $I \supseteq \cup_{j \in K} I_j$.
The second inequality uses the variational definition~\eqref{lambda}
together with the trivial fact that
the restriction to $I_j\times\omega$
of a function from $\Dom(Q_{\alpha}^I)$
belongs to $\Dom(Q_{\alpha}^{I_j})$.
\end{proof}

Theorem~\ref{Thm.Hardy} has important consequences.
\begin{Corollary}\label{Cor.uniform}
Let $\alpha \in L^\infty(I)$ be real-valued and
$I \subseteq \Real$ an (arbitrary) open interval (bounded or unbounded).
Suppose that~$\omega$ is not rotationally invariant with respect to the origin
and that there exists a positive number~$\alpha_0$ such that
$|\alpha(s)| \geq \alpha_0$ for a.e.\ $s \in I$.
Then
$$
  \inf\sigma(H_\alpha^I) > E_1
  \,.
$$
\end{Corollary}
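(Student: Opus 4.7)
My plan is to deduce the claim from Theorem \ref{Thm.Hardy} by tiling $I$ into intervals of a fixed length and establishing a uniform positivity bound on the local constants produced by Lemma \ref{Lem.cornerstone}. Fix $L>0$ and choose a partition of $I$ (up to a null set) into disjoint open subintervals $\{I_j\}_{j\in K}$ of length $L$. Theorem \ref{Thm.Hardy} yields
\[
H_\alpha^I-E_1 \;\geq\; \sum_{j\in K}\lambda(\alpha,I_j)\,\mathbf{1}_{I_j};
\]
since the $I_j$ essentially cover $I$, the strict inequality $\inf\sigma(H_\alpha^I)>E_1$ reduces to a uniform bound $\inf_{j\in K}\lambda(\alpha,I_j)\geq c>0$. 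Each $\lambda(\alpha,I_j)$ is already strictly positive by Lemma \ref{Lem.cornerstone}, since on every $I_j$ the restriction of $\alpha$ is non-trivial (thanks to $|\alpha|\geq\alpha_0$) and $\omega$ is not rotationally invariant; only the uniformity in $j$ remains.

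For the uniformity I would argue by translation and compactness. If it failed, there would exist $j_n\in K$ with $\lambda_n:=\lambda(\alpha,I_{j_n})\to 0$; after translating each $I_{j_n}$ onto $(0,L)$ one obtains twist functions $\alpha_n\in L^\infty(0,L)$ satisfying $\alpha_0\leq|\alpha_n|\leq\|\alpha\|_{L^\infty(I)}$ and normalized eigenfunctions $\psi_n$ of $H_{\alpha_n}^{(0,L)}-E_1$ with eigenvalue $\lambda_n\to 0$. Standard energy bounds keep $\{\psi_n\}$ in $W^{1,2}((0,L)\times\omega)$, so up to a subsequence $\psi_n\to(1/\sqrt{L})\,\mathcal{J}_1$ strongly in $L^2$ (and weakly in $W^{1,2}$), $\alpha_n$ tends weakly-$*$ in $L^\infty$ to some $\alpha^*$, and $\alpha_n^2$ tends weakly-$*$ to some $\beta^*\geq\alpha_0^2$ a.e.\ on $(0,L)$. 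The decisive quantitative fact is
\[
\liminf_{n\to\infty}\|\alpha_n\,\partial_u\psi_n\|^2_{L^2((0,L)\times\omega)}
\;\geq\;\alpha_0^2\,\|\partial_u\mathcal{J}_1\|^2_{L^2(\omega)}\;>\;0,
\]
the strict positivity coming from $\partial_u\mathcal{J}_1\not\equiv 0$, i.e.\ from the hypothesis that $\omega$ is not rotationally invariant (as in the proof of Lemma \ref{Lem.cornerstone}). Combined with $\|\partial_1\psi_n-\alpha_n\partial_u\psi_n\|\to 0$ and the Poincar\'e-type vanishing $\|\nabla'\psi_n\|^2-E_1\|\psi_n\|^2\to 0$, the fibrewise decomposition $\psi_n=\varphi_n\mathcal{J}_1+\phi_n$ with $\phi_n$ orthogonal to $\mathcal{J}_1$ in $L^2(\omega)$ for each $s$ supplies enough information on $\phi_n$ to contradict the Dirichlet condition $\phi_n|_{(0,L)\times\partial\omega}=0$.

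The main obstacle is precisely this last step: $\alpha^*=0$ is possible even though $|\alpha_n|\geq\alpha_0>0$ (rapid oscillation between $\pm\alpha_0$ suffices), so the contradiction cannot come from the naive weak limit of the identity $\partial_1\psi_n\approx\alpha_n\partial_u\psi_n$. The quadratic lower bound above, together with the Dirichlet trace on $\partial\omega$ and the fact that $\partial_u\mathcal{J}_1$ need not vanish on $\partial\omega$, is what ultimately rules out the oscillatory scenario. I expect the cleanest execution to mirror the analysis in Remark \ref{Rem.global}, working with the weighted decomposition $\psi_n=\phi_n\,f_{\alpha_0}$ where $f_{\alpha_0}$ is the ground state of the operator $B_{\alpha_0}$ on $\omega$, so that the cross terms organize nicely after integration by parts in the transverse variable.
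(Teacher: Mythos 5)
Your first step --- tiling $I$ by disjoint open subintervals of a common length, invoking Theorem~\ref{Thm.Hardy}, and reducing the claim to a $j$-uniform positive lower bound on $\lambda(\alpha,I_j)$ --- is exactly the paper's argument. The divergence, and the gap, lies in how that uniformity is obtained. The paper reads it off from the \emph{statement} of Lemma~\ref{Lem.cornerstone}: the constant $\lambda_0$ there depends only on $\|\alpha\|_{\sii(I_j)}$ and $\omega$ (the $I_j$ having a common length), and since $\alpha_0^2\,|I_j|\leq\|\alpha\|_{\sii(I_j)}^2\leq\|\alpha\|_{L^\infty(I)}^2\,|I_j|$, the bounds $\lambda_0^j$ are uniform in $j$. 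You use only the qualitative positivity of each $\lambda(\alpha,I_j)$ and then try to manufacture the uniformity yourself by translation and compactness; that argument is not completed.

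Concretely, the facts you do establish --- $\psi_n\to L^{-1/2}\mathcal{J}_1$ in $\sii$, the bound $\liminf_n\|\alpha_n\partial_u\psi_n\|^2\geq\alpha_0^2\|\partial_u\mathcal{J}_1\|_{\sii(\omega)}^2>0$, $\|\partial_1\psi_n-\alpha_n\partial_u\psi_n\|\to0$ and the vanishing of the transverse Poincar\'e deficit --- together yield only that $\|\partial_1\psi_n\|$ stays bounded away from zero while $\partial_1\psi_n\rightharpoonup0$ weakly. That is not yet a contradiction; it is precisely the oscillatory scenario you describe, and a remainder $\phi_n$ with $\|\phi_n\|_{\sii}\to0$, $\|\nabla'\phi_n\|_{\sii}\to0$ and $\partial_1\phi_n\approx\alpha_n\varphi_n\,\partial_u\mathcal{J}_1$ is compatible with everything you have derived. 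What actually rules it out is that $\partial_u\mathcal{J}_1\notin\sobi(\omega)$ while $\phi_n(s,\cdot)\in\sobi(\omega)$ --- in effect the positivity of the constant $C(\omega)$ of Theorem~\ref{Thm.thin} --- but ``supplies enough information on $\phi_n$ to contradict the Dirichlet condition'' is an announcement, not an estimate. To close the argument you would need a quantitative lower bound showing that $\alpha_n\varphi_n\,\partial_u\mathcal{J}_1$ cannot be matched, up to $o(1)$ in $\sii$, by $\partial_1$ of Dirichlet functions with vanishing transverse energy; that estimate is essentially the quantitative content of Lemma~\ref{Lem.cornerstone} which you are implicitly trying to reprove. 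Either invoke the Lemma's stated dependence on $\|\alpha\|_{\sii(I_j)}$ and $\omega$ directly, as the paper does, or carry out this boundary-obstruction estimate in full; as written, the proposal does neither.
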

\begin{proof}
For bounded tubes this is already stated in Lemma~\ref{Lem.cornerstone}.
In general, it follows from Theorem~\ref{Thm.Hardy} by covering~$I$
by a sequence of disjoint bounded open subintervals~$I_j$
having the same length~$|I_j|$, \ie,
$
  I = \mathrm{int}
  \left(
  \cup_{j \in K} \overline{I_j}
  \right)
$.
Indeed, Theorem~\ref{Thm.Hardy} yields
$
  H_\alpha^I - E_1 \geq \inf_{j \in K} \lambda(\alpha,I_j)
$.
By Lemma~\ref{Lem.cornerstone},
each $\lambda(\alpha,I_j)$ can be estimated from below
by a positive number~$\lambda_0^j$
which depends uniquely on $\|\alpha\|_{\sii(I_j)}$ and~$\omega$.
However, using the trivial bounds
$
  \alpha_0^2 \, |I_j|
  \leq \|\alpha\|_{\sii(I_j)}^2
  \leq \|\alpha\|_{L^\infty(I)}^2 \, |I_j|
$
and recalling the uniform length of~$I_j$'s,
we see that each~$\lambda_0^j$ can be estimated from below
by a positive number independent of~$j$.
\end{proof}

Recalling that~$H_{\dot\theta}^\Real$ coincides with~$H$,
which is unitarily equivalent to the Dirichlet Laplacian
in a twisted tube, we see that the never-vanishing twisting
rises the spectral threshold
(periodically twisted tubes of Remark~\ref{Rem.global}
are just one example).

However, the case of particular interest corresponds to $I=\Real$
with~$\alpha$ vanishing at infinity.
Then~$E_1$ corresponds to the threshold of the (essential) spectrum
of~$H_\alpha^\Real$ (\cf~Theorem~\ref{Thm.ess})
and~\eqref{Hardy} may be referred to as
a Hardy inequality for $H_\alpha^\Real-E_1$.
If~$\alpha$ is of definite sign on~$\Real$
and~$\omega$ is not rotationally invariant with respect to the origin,
then~\eqref{Hardy} is \emph{global} in the sense
that its right hand side
(with bounded $I_j$'s covering~$\Real$
as in the proof of Corollary~\ref{Cor.uniform})
represents a positive Hardy weight vanishing at infinity only;
the rate in which it goes to zero at infinity
is determined by asymptotic properties of~$\alpha$.
On the other hand, if~$\alpha$ equals zero
outside a bounded interval, then~\eqref{Hardy} is \emph{local}
since the Hardy weight vanishes outside the interval too.

In the latter case, however, Theorem~\ref{Thm.Hardy}
implies the following global Hardy inequality:
\begin{Theorem}\label{Thm.Hardy.global}
Let $I\subseteq\Real$ be an open interval.
Let~$\omega$ be not rotationally invariant with respect to the origin.
Let $\alpha \in L^\infty(I)$ be a non-trivial real-valued function
of compact support in~$I$.
Then
$$
  H_\alpha^I - E_1
  \ \geq \
  \frac{c}{1+\delta^2}
$$
in the sense of quadratic forms.
Here~$\delta(s,t):=|s-s_0|$,
$(s,t) \in I\times\omega$,
$s_0$ is the mid-point of the interval
$J := (\inf\supp \alpha,\sup\supp\alpha)$,
and~$c$ is a positive constant depending on~$\alpha$ and~$\omega$.
\end{Theorem}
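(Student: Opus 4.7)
The strategy is to bootstrap the \emph{local} Hardy inequality supplied by Theorem~\ref{Thm.Hardy} into a globally decaying Hardy weight by coupling it with a classical one-dimensional Hardy--Poincar\'e estimate on the longitudinal variable. Applying Theorem~\ref{Thm.Hardy} with the single subinterval $I_1 = J$, Lemma~\ref{Lem.cornerstone} yields $\lambda_1 := \lambda(\alpha,J) > 0$, whence the seed estimate
$$
  Q_\alpha^I[\psi] - E_1 \, \|\psi\|_{\sii(I\times\omega)}^2
  \;\geq\; \lambda_1 \, \|\psi\|_{\sii(J\times\omega)}^2
  \qquad (\star)
$$
for every $\psi \in \Dom(Q_\alpha^I)$.

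The auxiliary ingredient is the one-dimensional Hardy--Poincar\'e estimate
$$
  \int_I \frac{|g(s)|^2}{1+(s-s_0)^2} \, ds
  \;\leq\; C \Big( \|\dot g\|_{\sii(I)}^2 + \|g\|_{\sii(J)}^2 \Big),
  \qquad g \in \Sobi(I),
$$
with $C$ depending only on $|J|$. On each half-line $(\sup J,\infty)$ and $(-\infty,\inf J)$ this follows from the classical Hardy inequality applied to $g$ minus its boundary value at the corresponding endpoint of $J$, the boundary value being controlled by the one-dimensional Sobolev embedding $\Sobi(J) \hookrightarrow C(\overline J)$; on $J$ the weight is bounded and the estimate is trivial.

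Apply this inequality slice-wise to $g(s) := \psi(s,t)$ and integrate over $t \in \omega$ to obtain
$$
  \int_{I\times\omega} \frac{|\psi|^2}{1+\delta^2}\, ds\,dt
  \;\leq\; C \Big( \|\partial_1\psi\|_{\sii(I\times\omega)}^2 + \|\psi\|_{\sii(J\times\omega)}^2 \Big).
$$
By $(\star)$ the second term on the right is already bounded by a constant multiple of $Q_\alpha^I[\psi] - E_1\|\psi\|_{\sii(I\times\omega)}^2$. For the first, decompose $I = J \cup (I\setminus J)$: on $I\setminus J$ we have $\alpha \equiv 0$ a.e., and the cross-sectional Poincar\'e inequality~\eqref{Poincare} gives $\|\partial_1\psi\|_{\sii((I\setminus J)\times\omega)}^2 \leq Q_\alpha^I[\psi] - E_1\|\psi\|_{\sii(I\times\omega)}^2$ directly; on $J$, the elementary bound $|\partial_1\psi|^2 \leq 2|\partial_1\psi - \alpha\partial_u\psi|^2 + 2\|\alpha\|_{L^\infty(I)}^2 a^2 |\nabla'\psi|^2$ together with $\int_{J\times\omega} |\nabla'\psi|^2 \leq (Q_\alpha^I[\psi] - E_1\|\psi\|_{\sii(I\times\omega)}^2) + E_1 \|\psi\|_{\sii(J\times\omega)}^2$ produces a residual mass $E_1\|\psi\|_{\sii(J\times\omega)}^2$ that is absorbed by a second appeal to $(\star)$. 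All pieces combine to yield the claim with some $c = c(\alpha,\omega) > 0$.

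The main obstacle is the final decoupling. The form naturally delivers $\|\partial_1\psi - \alpha\partial_u\psi\|^2$ together with the Poincar\'e gap $\|\nabla'\psi\|^2 - E_1\|\psi\|^2$, whereas the one-dimensional transfer demands $\|\partial_1\psi\|^2$ by itself. Decoupling introduces an error proportional to the full $\|\nabla'\psi\|^2$, and this includes a mass term $E_1\|\psi\|^2$ that is \emph{not} controlled globally by the form. The resolution is that decoupling is only needed on the compact set $J$, where $(\star)$ supplies just enough reserve to absorb the spurious $E_1\|\psi\|_{\sii(J\times\omega)}^2$; outside $J$ the twist vanishes and no decoupling is necessary. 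This explains why the final constant $c$ depends on both $\alpha$ (through $\|\alpha\|_{L^\infty(I)}$, $|J|$ and~$\lambda_1$) and $\omega$ (through~$a$ and~$E_1$).
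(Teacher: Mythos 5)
Your proposal is correct and follows essentially the same route as the paper: the local positivity $\lambda(\alpha,J)>0$ from Theorem~\ref{Thm.Hardy} and Lemma~\ref{Lem.cornerstone}, a one-dimensional Hardy-type inequality transferring $\|\partial_1\psi\|^2$ plus the mass on $J\times\omega$ into the weight $(1+\delta^2)^{-1}$, extraction of $\|\partial_1\psi\|^2$ from the form at the price of a residual $E_1\|\psi\|^2_{\sii(J\times\omega)}$, and absorption of that residual by interpolating with the local bound. The only cosmetic difference is that you decouple the mixed term via the splitting $I=J\cup(I\setminus J)$ and the bound $|x+y|^2\le 2|x|^2+2|y|^2$, where the paper uses a single $\eps$-weighted Cauchy--Schwarz estimate on all of $I$; both produce the same residual term supported on $J\times\omega$.
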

\begin{proof}
For clarity of the exposition,
we divide the proof into several steps.

\smallskip
\noindent
1.~The main ingredient in the proof is the following
Hardy-type inequality for a Schr\"odinger operator
in $I\times\omega$
with a characteristic-function potential:
\begin{equation}\label{Hardy.classical}
  \|(1+\delta^2)^{-1/2}\psi\|_{\sii(I\times\omega)}^2
  \leq 16 \, \|\partial_1\psi\|_{\sii(I\times\omega)}^2
  + (2+64/|J|^2) \, \|\psi\|_{\sii(J\times\omega)}^2
\end{equation}
for every $\psi \in \Sobi(I\times\omega)$ and $ J\subset I$.
This inequality can be established quite easily (\cf~\cite[Sec.~3.3]{EKK})
by means of Fubini's theorem
and the classical one-dimensional Hardy inequality
$
  \int_0^b s^{-2} |\varphi(s)|^2 ds
  \leq 4 \int_0^b |\dot\varphi(s)|^2 ds
$
valid for any $\varphi\in\Sobi((0,b))$, $b>0$,
satisfying $\varphi(0)=0$.

\smallskip
\noindent
2.~By Theorem~\ref{Thm.Hardy},
we have
\begin{equation}\label{bound1}
  Q_{\alpha}^{I}[\psi] - E_1 \;\! \|\psi\|_{\sii(I\times\omega)}^2
  \, \geq \,
  \lambda(\alpha,J) \, \|\psi\|_{\sii(J\times\omega)}^2
\end{equation}
for every $\psi \in \Dom(Q_{\alpha}^{I})$.
Under the stated hypotheses,
we know from Lemma~\ref{Lem.cornerstone} that
$\lambda(\alpha,J)$ is a positive number.

\smallskip
\noindent
3.~Finally, for sufficiently small $\eps\in(0,1)$
and every $\psi \in \Dom(Q_{\alpha}^{I})$,
we have
\begin{eqnarray}\label{bound3}
\lefteqn{
   Q_{\alpha}^{I}[\psi] - E_1 \;\! \|\psi\|_{\sii(I\times\omega)}^2
   }
   \nonumber \\
   && \geq
   \eps \, \|\partial_1\psi\|_{\sii(I\times\omega)}^2
   + \|\nabla'\psi\|_{\sii(I\times\omega)}^2
   - E_1 \;\! \|\psi\|_{\sii(I\times\omega)}^2
   - \frac{\eps}{1-\eps} \, \|\alpha\partial_u\psi\|_{\sii(I\times\omega)}^2
   \nonumber \\
   && \geq
   \eps \, \|\partial_1\psi\|_{\sii(I\times\omega)}^2
   - \frac{\eps}{1-\eps} \, \|\alpha\|_{L^\infty(I)}^2 \, a^2 \, E_1
   \;\! \|\psi\|_{\sii(J\times\omega)}^2
   \,.
\end{eqnarray}
The first inequality is due to elementary estimates
of the mixed term of $Q_{\alpha}^{I}[\psi]$
based on the Schwarz and Cauchy inequalities.
The second inequality follows by a consecutive use
of the pointwise bound
$
  |\alpha(s) \partial_u\psi(s,t)|
  \leq \|\alpha\|_{L^\infty(I)} \, \chi_{J}(s) \, a \, |\nabla'\psi(s,t)|
$
and the inequality~\eqref{Poincare},
with help of Fubini's theorem;
it holds provided~$\eps$ is less than $(1+a^2 \|\alpha\|_{L^\infty(I)}^2)^{-1}$.

\smallskip
\noindent
4.~Interpolating between the bounds~\eqref{bound1} and~\eqref{bound3},
and using~\eqref{Hardy.classical} in the latter,
we finally arrive at
\begin{multline*}
   Q_{\alpha}^{I}[\psi] - E_1 \;\! \|\psi\|_{\sii(I\times\omega)}^2
   \geq
   \frac{1}{2}
   \frac{\eps}{16} \
   \|(1+\delta^2)^{-1/2}\psi\|_{\sii(I\times\omega)}^2
   \\
   +
   \frac{1}{2}
   \left[
   \lambda(\alpha,J)
   - \eps \, \left(\frac{1}{8}+\frac{4}{|J|^2}\right)
   - \frac{\eps}{1-\eps} \, \|\alpha\|_{L^\infty(I)}^2 \, a^2 E_1
   \right]
   \|\psi\|_{\sii(J\times\omega)}^2
\end{multline*}
for every $\psi \in \Dom(Q_{\alpha}^{I})$.
It is clear that the last line on the right hand side of this inequality
can be made non-negative by choosing~$\eps$ sufficiently small.
Such an~$\eps$ then determines the Hardy constant~$c$.
\end{proof}
%

\subsection{Historical remarks}
%
The tool of Hardy inequalities in a quantum-wave\-guide context
was used for the first time by Ekholm and Kova\v{r}{\'\i}k in \cite{MK-Kov}
(see also \cite{B-MK-Kov}).
In their papers, the existence of a magnetic Hardy-type inequality
in a two-dimensional strip was established and used to prove
certain stability of the spectrum against geometric
or boundary-condition perturbations.

The existence of a Hardy inequality in twisted tubes
was first conjectured by Timo Weidl in a private communication
after my talk in Giens~\cite{K-Giens} based on~\cite{EFK},
and proved subsequently by Ekholm, Kova\v{r}{\'\i}k
and the present author in~\cite{EKK}.
In this reference,
variants of Theorems~\ref{Thm.Hardy} and~\ref{Thm.Hardy.global}
were established under additional assumptions about~$\alpha$
(it had to be a continuous function with bounded derivatives).
In the present paper, we give a new proof,
which we believe is more elegant and straightforward,
and which enables us to relax the technical hypotheses about~$\alpha$.

Exner and Kova\v{r}{\'\i}k studied in~\cite{EKov_2005}
the case of a periodically twisted tube
(\ie\ constant~$\alpha$)
and demonstrated, \emph{inter alia}, that the spectral
threshold of~$H_{\alpha}^\Real$ starts strictly above~$E_1$
(\cf~Remark~\ref{Rem.global}).
As the main result, it was shown that a local perturbation
of the periodically twisted tube
may give rise to eigenvalues below the essential spectrum
(\cf~also~\cite{Exner-Fraas_2007}).

The initial paper~\cite{EKK} has been followed by a couple
of subsequent works in which the robustness of the spectral-geometric
effect due to twisting has been clearly demonstrated in other models, too.
In particular:

\smallskip
\noindent
1.~\underline{Twisted strips}.
The present author demonstrated in~\cite{K3}
that similar Hardy inequalities exist also
for the Dirichlet Laplacian in two-dimensional strips
twist\-ed in the three-dimensional Euclidean space.
Such strips can be reconsidered as being embedded in ruled surfaces~\cite{K1}
(\cf~also Figure~\ref{f-strip}).
Then the Hardy inequalities
are effectively induced by negative Gauss curvature.
Roughly speaking, negative curvature of the ambient space
acts as a repulsive interaction.

\smallskip
\noindent
2.~\underline{Twisted boundary conditions}.
Of course, there is no geometrical twist for planar strips.
However, one can still introduce some sort of twisting
by changing boundary conditions appropriately.
This was done by Kova\v{r}{\'\i}k and the present author in~\cite{KK2}.
In that paper, a Hardy inequality was demonstrated for the Laplacian
in a straight strip, subject to a combination
of Dirichlet and Neumann boundary conditions (see Figure~\ref{f-DN}),
and used to prove an absence of discrete eigenvalues
in the case when the Neumann boundary conditions overlap.
(In fact, the model itself was initially introduced by
Dittrich and K\v{r}{\'\i}\v{z} in~\cite{DKriz1},
where the latter result was proved by a different method.)

\begin{figure}[!h]
\begin{center}
\includegraphics[width=0.95\textwidth]{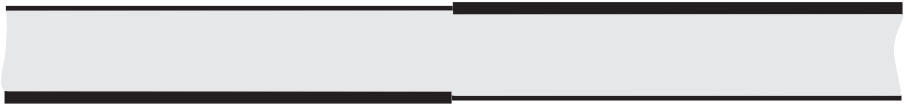}
\caption{Twisting in the two-dimensional model of~\cite{KK2}
introduced via switching Dirichlet (thick lines)
to Neumann (thin lines) boundary conditions at one point,
and \emph{vice versa}.}\label{f-DN}
\end{center}
\end{figure}
%


Let us also mention the recent work~\cite{Kovarik-Sacchetti_2007}
where a repulsive effect of twisting
on embedded eigenvalues is demonstrated.

As for non-twisted models,
in addition to the magnetic strips mentioned above,
the tool of Hardy inequalities was used also in curved waveguides
with a combination of (non-twisted) Dirichlet and Robin boundary conditions
\cite{FK3}.

In contrast to the waveguide (or more generally quasi-cylindrical) case,
there is an extensive literature on the Hardy inequalities
in bounded domains or complete manifolds;
we refer to the review article~\cite{Davies_1999}.

\section{Twisting contra mild bending}\label{Sec.mild}
%
It follows from Theorem~\ref{Thm.bending} that any bending,
no matter how small, generates a spectrum of $-\Delta_D^\Omega$
below the energy~$E_1$ provided that the tube is not twisted.
As an application of the Hardy inequality of Theorem~\ref{Thm.Hardy.global},
we prove now that whenever the tube is twisted
some critical strength of the bending is needed
in order to induce a spectrum below~$E_1$.
\begin{Theorem}\label{Thm.mild}
Let~$\omega$ be not rotationally invariant with respect to the origin.
Let $\tau-\dot\theta$ be a non-trivial function of compact support.
Assume that for all $s\in\Real$,
$$
  |\kappa(s)| \leq \eps(s) := \frac{\eps_0}{1+s^2}
  \qquad\mbox{with}\quad
  \eps_0 \geq 0
  \,.
$$
Then there exists a positive number~$\eps_0^*$ such that
$\eps_0 \leq \eps_0^*$ implies that
$$
  \inf\sigma(-\Delta_D^\Omega) \geq E_1
  \,.
$$
Here $\eps_0^*$ depends on $\tau-\dot\theta$ and~$\omega$.
\end{Theorem}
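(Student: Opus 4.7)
The plan is to show that $Q[\psi]\ge E_1\,\|\psi\|_\Hilbert^2$ for every $\psi\in\Dom(Q)$, by comparing $Q$ with the form of the twisted but non-bent tube and invoking the global Hardy inequality of Theorem~\ref{Thm.Hardy.global} to absorb the bending perturbation. Writing out the metric~\eqref{metric} and using the identity $h_2\partial_2+h_3\partial_3=-\beta\,\partial_u$ with $\beta:=\tau-\dot\theta$, a direct computation gives
$$
  Q[\psi] \;=\; \int_{\Real\times\omega}\frac{|\partial_1\psi+\beta\,\partial_u\psi|^2}{h}
  \;+\; \int_{\Real\times\omega} h\,|\nabla'\psi|^2,
  \qquad
  \|\psi\|_\Hilbert^2 \;=\; \int_{\Real\times\omega} h\,|\psi|^2.
$$
Set $A[\psi]:=\|\partial_1\psi+\beta\,\partial_u\psi\|_{\sii(\Real\times\omega)}^2$ and $B[\psi]:=\|\nabla'\psi\|_{\sii(\Real\times\omega)}^2-E_1\|\psi\|_{\sii(\Real\times\omega)}^2$, both non-negative in view of~\eqref{Poincare}.

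The first step is to exploit the pointwise bounds $|h-1|\le a|\kappa|\le a\eps(s)$, with $\eps(s)=\eps_0/(1+s^2)$, together with $1/h\ge 1-a\eps(s)$ and the slice-wise form of~\eqref{Poincare} (namely, at fixed $s$, $\int_\omega(h-1)[|\nabla'\psi|^2-E_1|\psi|^2]\,dt \ge -a|\kappa(s)|\int_\omega[|\nabla'\psi|^2+E_1|\psi|^2]\,dt$) to arrive at
$$
  Q[\psi]-E_1\,\|\psi\|_\Hilbert^2 \;\ge\; A[\psi]+B[\psi]
  \;-\; a\eps_0\int_{\Real\times\omega}\frac{|\partial_1\psi+\beta\,\partial_u\psi|^2+|\nabla'\psi|^2+E_1|\psi|^2}{1+s^2}.
$$
The first and third summands of the error integral are trivially bounded by $A[\psi]$ and by $J_3[\psi]:=\int|\psi|^2/(1+s^2)$ respectively; the middle one is handled by writing $|\nabla'\psi|^2=(|\nabla'\psi|^2-E_1|\psi|^2)+E_1|\psi|^2$ and using that the first piece is non-negative after integration in the cross-section, combined with $1/(1+s^2)\le 1$, to get $\int|\nabla'\psi|^2/(1+s^2)\le B[\psi]+E_1\,J_3[\psi]$. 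Hence
$$
  Q[\psi]-E_1\,\|\psi\|_\Hilbert^2 \;\ge\; (1-a\eps_0)\bigl(A[\psi]+B[\psi]\bigr)-2\,a\,\eps_0\,E_1\,J_3[\psi].
$$

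To close the argument, I would apply Theorem~\ref{Thm.Hardy.global} with $\alpha:=-\beta$ (bounded, real-valued, non-trivial and compactly supported by the hypotheses on $\tau-\dot\theta$) to obtain $A[\psi]+B[\psi]\ge c\int|\psi|^2/(1+\delta^2)$, and use the elementary comparison $1+\delta^2\le 2(1+s_0^2)(1+s^2)$ to deduce $J_3[\psi]\le 2(1+s_0^2)\int|\psi|^2/(1+\delta^2)$. Both remaining terms then share the common measure $\int|\psi|^2/(1+\delta^2)$, yielding
$$
  Q[\psi]-E_1\,\|\psi\|_\Hilbert^2 \;\ge\; \bigl[(1-a\eps_0)\,c - 4\,a\,\eps_0\,E_1\,(1+s_0^2)\bigr]\int_{\Real\times\omega}\frac{|\psi|^2}{1+\delta^2}.
$$
The bracket is non-negative for every $\eps_0\le\eps_0^*:=c/[\,a\,(c+4E_1(1+s_0^2))\,]$, which depends only on $\beta=\tau-\dot\theta$ (through $c$ and $s_0$) and on $\omega$ (through $c$, $E_1$, and $a$), as required.

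The main obstacle is the bookkeeping in the second step: a naive substitution $h\approx 1$ with uniform error $O(a\eps_0)$ leaves a residual term of size $O(a\eps_0)\,\|\psi\|_{\sii}^2$ that does not decay at infinity and therefore cannot be absorbed by any Hardy-type weight. Preserving the $1/(1+s^2)$ decay of $\kappa$ throughout the estimates---in particular via the slice-wise Poincar\'e trick that keeps the troublesome transverse weighted term $\int|\nabla'\psi|^2/(1+s^2)$ under the control of $B[\psi]+E_1\,J_3[\psi]$, rather than of the unweighted kinetic norm $\int|\nabla'\psi|^2$---is the crucial technical point that makes the Hardy inequality strong enough to dominate the bending perturbation.
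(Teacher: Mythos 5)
Your proposal is correct and follows essentially the same route as the paper: compare $Q$ with the form $Q_{\dot\theta-\tau}^{\Real}$ of the straight twisted tube, apply the global Hardy inequality of Theorem~\ref{Thm.Hardy.global}, and absorb the bending error by exploiting that it decays like $1/(1+s^2)$. The only difference is cosmetic bookkeeping --- you expand the form explicitly as a sum of squares and distribute the error term by term, whereas the paper uses the matrix bound $G^{-1}\geq h_+^{-2}G_0^{-1}$ and a uniform prefactor $h_-/h_+^2$ plus a correction $g(s)=\mathcal{O}(s^{-2})$ --- and your closing observation about why the decay of $\kappa$ must be preserved throughout is exactly the crucial point.
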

\begin{proof}
Let~$\psi$ belong to $C_0^\infty(\Real\times\omega)$,
a dense subspace of $\Dom(Q)$.
The proof is based on an algebraic comparison
of $Q[\psi]-E_1\|\psi\|_{\Hilbert}^2$
with $Q_{\dot{\theta}-\tau}^\Real[\psi]-E_1\|\psi\|_{\sii(\Real\times\omega)}^2$
and the usage of Theorem~\ref{Thm.Hardy.global}.
Let~$G_0$ be the matrix~\eqref{metric} after letting $\kappa=0$;
then
$
  Q_{\dot{\theta}-\tau}^\Real[\psi]
  = (\partial_i\psi,G_0^{ij}\partial_j\psi)_{\sii(\Real\times\omega)}
$.
First we note that
$$
  \forall (s,t)\in\Real\times\omega \,, \qquad
  h_-(s) :=
  1 - a |\kappa(s)|
  \leq h(s,t) \leq
  1 + a |\kappa(s)|
  =: h_+(s)
  \,.
$$
Second, since $G^{-1}G_0$ is a matrix
with eigenvalues~$1$ (double) and $h^{-2}$,
we also have
$$
  \forall (s,t)\in\Real\times\omega \,, \qquad
  G^{-1}(s,t) \geq h_+^{-2}(s) \ G_0^{-1}(s,t)
$$
in the sense of matrices.
Consequently,
\begin{eqnarray}\label{Hardy.mild}
\lefteqn{
  Q[\psi]-E_1\|\psi\|_{\Hilbert}^2
  }
  \nonumber \\
  && \geq \int_{\Real}
  \frac{h_-(s)}{h_+^2(s)}
  \left[
  \big(
  \partial_i\psi(s,\cdot),G_0^{ij}(s,\cdot)\partial_j\psi(s,\cdot)
  \big)_{\sii(\omega)}
  - E_1 \;\! \|\psi(s,\cdot)\|_{\sii(\omega)}^2
  \right]
  ds
  \nonumber \\
  && \phantom{\geq} + E_1 \int_\Real
  g(s) \,
  \|\psi(s,\cdot)\|_{\sii(\omega)}^2 \,
  ds
  \qquad\mbox{with}\qquad
  g := \frac{h_-}{h_+^2} - h_+
  \nonumber \\
  && \geq
  \frac{1-a\|\kappa\|_{L^\infty(\Real)}}
  {(1+a\|\kappa\|_{L^\infty(\Real)})^2}
  \left(
  Q_{\dot{\theta}-\tau}^\Real[\psi]-E_1\|\psi\|_{\sii(\Real\times\omega)}^2
  \right)
  \nonumber \\
  && \phantom{\geq}
  + E_1 \int_\Real
  g(s) \,
  \|\psi(s,\cdot)\|_{\sii(\omega)}^2 \,
  ds
  \nonumber \\
  && \geq
  \int_\Real \left[
  \frac{1-a\|\kappa\|_{L^\infty(\Real)}}
  {(1+a\|\kappa\|_{L^\infty(\Real)})^2}
  \frac{c}{1+(s-s_0)^2} + E_1 \;\! g(s)
  \right]
  \|\psi(s,\cdot)\|_{\sii(\omega)}^2
  ds
  \,,
\end{eqnarray}
where the last expression is non-negative
for sufficiently small~$\eps_0$
because $g(s)=\mathcal{O}(s^{-2})$ as $|s| \to \infty$
due to the stated assumption about~$\kappa$.
In the second inequality we have used~\eqref{Poincare}.
The last inequality follows by Theorem~\ref{Thm.Hardy.global}
with~$s_0$ being the mid-point of the interval
$\big(\inf\supp(\tau-\dot\theta),\sup\supp(\tau-\dot\theta)\big)$.
\end{proof}

As a direct consequence of Theorems~\ref{Thm.mild} and~\ref{Thm.ess},
we get that the spectrum $[E_1,\infty)$ is stable as a set:
\begin{Corollary}
Under the hypotheses of Theorem~\ref{Thm.mild},
we have for all $\eps_0 \leq \eps_0^*$,
$$
  \sigma(-\Delta_D^\Omega) = [E_1,\infty)
  \,.
$$
\end{Corollary}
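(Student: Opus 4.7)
The plan is to combine the two inclusions given by the preceding results, in the spirit of any ``spectrum equals essential spectrum'' argument when the discrete spectrum is known to be empty. Since the essential spectrum is always contained in the spectrum, Theorem~\ref{Thm.ess} supplies the inclusion $\sigma(-\Delta_D^\Omega) \supseteq [E_1,\infty)$, while Theorem~\ref{Thm.mild} supplies the reverse inclusion $\sigma(-\Delta_D^\Omega) \subseteq [E_1,\infty)$ via the lower bound $\inf\sigma(-\Delta_D^\Omega) \geq E_1$ on the Rayleigh quotient.

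First I would verify that the hypotheses of Theorem~\ref{Thm.ess} are indeed met under the assumptions of Theorem~\ref{Thm.mild}. The decay assumption $|\kappa(s)| \leq \eps_0/(1+s^2)$ immediately gives $\lim_{|s|\to\infty}\kappa(s) = 0$, and the compact support hypothesis on $\tau-\dot\theta$ gives $\lim_{|s|\to\infty}[\tau(s)-\dot\theta(s)] = 0$ trivially. Hence~\eqref{Ass.decay} is satisfied and Theorem~\ref{Thm.ess} yields $\sigma_\mathrm{ess}(-\Delta_D^\Omega) = [E_1,\infty)$, so in particular $[E_1,\infty) \subseteq \sigma(-\Delta_D^\Omega)$.

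Next I would invoke Theorem~\ref{Thm.mild} directly: for $\eps_0 \leq \eps_0^*$ it asserts $\inf\sigma(-\Delta_D^\Omega) \geq E_1$, which is equivalent to $\sigma(-\Delta_D^\Omega) \subseteq [E_1,\infty)$ since the spectrum is a closed subset of~$\Real$. Combining the two inclusions produces the claimed equality $\sigma(-\Delta_D^\Omega) = [E_1,\infty)$.

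There is no real obstacle here; the corollary is essentially a one-line consequence of the preceding two theorems, and the only thing to check is the compatibility of their hypotheses, which is immediate. The substantive work has already been done in Theorem~\ref{Thm.mild} (where the Hardy inequality of Theorem~\ref{Thm.Hardy.global} absorbed the bending contribution for mild enough $\kappa$) and in Theorem~\ref{Thm.ess} (where a Neumann bracketing plus a Weyl-sequence construction determined the essential spectrum).
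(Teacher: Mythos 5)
Your proposal is correct and matches the paper exactly: the corollary is stated there as a direct consequence of Theorems~\ref{Thm.mild} and~\ref{Thm.ess}, obtained by combining the inclusion $[E_1,\infty)=\sigma_\mathrm{ess}(-\Delta_D^\Omega)\subseteq\sigma(-\Delta_D^\Omega)$ with the lower bound $\inf\sigma(-\Delta_D^\Omega)\geq E_1$. Your verification that the decay of $\kappa$ and the compact support of $\tau-\dot\theta$ guarantee the hypotheses~\eqref{Ass.decay} of Theorem~\ref{Thm.ess} is the only point needing a check, and you handled it correctly.
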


A variant of Theorem~\ref{Thm.mild} was proved
by Ekholm, Kova\v{r}{\'\i}k and the present author in~\cite{EKK}
under the additional assumptions that~$\kappa$
was compactly supported and that the supremum norm
of the derivative~$\dot\kappa$ was sufficiently small too.
The simpler proof we perform in the present paper
is inspired by an idea of~\cite{K3}.
Notice that the statement of Theorem~\ref{Thm.mild}
follows as a consequence of a stronger result,
namely the Hardy-type inequality~\eqref{Hardy.mild}.

If~$\Omega$ is mildly bent but not twisted,
$-\Delta_D^\Omega$ possesses weakly-coupled eigenvalues
whose asymptotic properties were initially studied
by Duclos and Exner in~\cite{DE}.
Further results, namely sufficient and necessary conditions
for the existence of the weakly-coupled eigenvalues
in simultaneously mildly twisted and bent tubes,
were obtained only recently by Grushin in~\cite{Grushin_2005}
(\cf~also~\cite{Grushin_2004}).
The influence of twisting on \emph{embedded} eigenvalues
of Schr\"odinger operators in mildly twisted tubes without bending
were analysed in another recent work~\cite{Kovarik-Sacchetti_2007}
by Kova\v{r}{\'\i}k and Sacchetti.

\section{Conclusions}\label{Sec.end}
%
Motivated by mesoscopic physics,
in this paper we were interested in the interplay between
the geometry of a three-dimensional tube
and spectral properties of the associated Dirichlet Laplacian.
The moral of our study is as follows:
\begin{enumerate}
\item
\emph{bending acts as an attractive interaction};
\item
\emph{twisting acts as a repulsive interaction}.
\end{enumerate}
The effect of bending has been known for almost two decades
\cite{ES,GJ,DE,ChDFK}.
On the other hand, the effect of twisting is a very recent result,
based on an existence of Hardy-type inequalities in twisted tubes~\cite{EKK}.
The main goal of the present work was to revise and improve
the original results of~\cite{EKK},
and provide a self-contained publication
on the two independent spectral-geometric effects
of bending and twisting.

Let us conclude the paper with some open problems:

\smallskip
\noindent
1.~\underline{Higher-dimensional generalizations}.
It is well known that bending of a tube about
a complete non-compact surface
in the three-dimensional Euclidean space
acts as an attractive interaction
\cite{DEK2,CEK,LL1,LL2,LL3}.
On the other hand, by analogy with the present model,
we expect the existence of Hardy-type inequalities
if the codimension of the reference manifold increases.
Prove it.

\smallskip
\noindent
2.~\underline{Effect of twisting on the essential spectrum}.
A detailed study of the nature of the essential spectrum
in non-twisted bent tubes was performed in~\cite{KT} via the Mourre theory.
Can one improve the analysis by using
the existence of Hardy inequalities in twisted tubes?
A repulsive effect of twisting on eigenvalues embedded
in the essential spectrum in non-bent tubes
was demonstrated recently in~\cite{Kovarik-Sacchetti_2007}.

\smallskip
\noindent
3.~\underline{An optimization problem}.
For simplicity, let us assume that~$\omega$
is the disc~$B_a$ of fixed radius~$a$ centred at the origin of~$\Real^2$.
In~\cite{EFK} Exner, Freitas and the present author proved
that the inequality
\begin{equation}\label{EFK-inequality}
  \lambda_1(\Omega) := \inf\sigma(-\Delta_D^\Omega)
  \ \geq \
  \lambda_1(\Omega^*)
\end{equation}
holds for every tube~$\Omega$ (bounded or unbounded)
about any curve~$\Gamma$ satisfying~\eqref{Ass.basic1}.
Here $\lambda_1(\Omega^*)$ denotes the first eigenvalue
of the Dirichlet Laplacian in the toroidal tube~$\Omega^*$ obtained
by revolving~$B_a$ about an axis at the distance $(\sup\kappa)^{-1}$
from the centre of~$B_a$ (with the convention that~$\Omega^*$ is
a straight tube if $\kappa=0$).
Clearly, the inequality~\eqref{EFK-inequality} is optimal
in the sense that the equality is achieved for a tube geometry.
However, the question about an optimal lower bound
under the additional constraint that~$\Omega$ is \emph{unbounded}
is more difficult and remains open.

\smallskip
\noindent
4.~\underline{Thin tubes}.
The following beautiful result provides an insight
into the mechanism behind our qualitative results,
at least in the regime of thin tubes:
\begin{Theorem}[Bouchitt\'e, Mascarenhas and Trabucho \cite{BMT}]\label{Thm.thin}
Let $\Omega_\eps:=\mathcal{L}(I\times\eps\omega)$,
where $I\subset\Real$ is a bounded open interval
and $\eps\omega:=\{\eps t \,|\, t\in\omega\}$, $\eps>0$.
Let $\{\lambda_j(\eps)\}_{j=1}^\infty$ be the non-decreasing
sequence of eigenvalues (repeated according to multiplicities)
of the Dirichlet Laplacian $-\Delta_D^{\Omega_\eps}$ in $\sii(\Omega_\eps)$.
Then
$$
  \lambda_j(\eps) = \frac{E_1}{\eps^2} + \mu_j + o(1)
  \qquad\mbox{as}\qquad \eps \to 0
  \,,
$$
where $\{\mu_j\}_{j=1}^\infty$ denotes
the non-decreasing sequence of eigenvalues
(repeated according to multiplicities)
of the Schr\"odinger operator
$$
  -\Delta_D^{I} - \frac{\kappa^2}{4} + C(\omega) (\tau-\dot\theta)^2
  \qquad\mbox{in}\qquad
  \sii(I)
  \,.
$$
Here $C(\omega)$ is a non-negative constant depending uniquely on~$\omega$.
Moreover, $C(\omega)>0$ if, and only if,
$\omega$ is not rotationally invariant with respect to the origin.
\end{Theorem}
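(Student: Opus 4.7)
The plan is dimensional reduction combined with Rayleigh--Ritz bounds. First I would identify $\Omega_\eps$ with $(I \times \omega, G_\eps)$ via the embedding $\mathcal{L}_\eps(s,t) := \Gamma(s) + \eps \sum_{j=2}^{3} t_j \, e_j^\theta(s)$, rescale the transverse variable so that the base becomes the fixed domain $I \times \omega$, and read off the rescaled metric coefficients $h_\eps = 1 - \eps[t_2\cos\theta + t_3\sin\theta]\kappa$, $h_{\eps,2} = -\eps t_3(\tau-\dot\theta)$, $h_{\eps,3} = \eps t_2(\tau-\dot\theta)$. The rescaled quadratic form splits as $\tilde Q_\eps = \eps^{-2} Q_\eps^{\perp} + Q_\eps^{\parallel}$ in which the transverse part carries the leading $\eps^{-2}$ singularity, making the shift by $E_1/\eps^2$ natural.

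For the upper bound, let $v_1, \dots, v_j$ be the first $j$ normalized eigenfunctions of the one-dimensional limit operator
\[
  L := -\Delta_D^I - \tfrac14 \kappa^2 + C(\omega)(\tau-\dot\theta)^2
  \quad\text{on}\quad \sii(I),
\]
with eigenvalues $\mu_1 \leq \cdots \leq \mu_j$. I would employ trial functions of the form $\psi_k^\eps(s,t) = v_k(s)\mathcal{J}_1(t) + \eps \zeta_k(s,t)$ whose transverse corrector $\zeta_k(s, \cdot) \perp \mathcal{J}_1$ in $\sii(\omega)$ is the unique solution of a cross-sectional Poisson problem $(-\Delta_D^\omega - E_1)\zeta_k(s,\cdot) = F_k(s,\cdot)$ with source $F_k$ explicitly built from $\kappa$, $(\tau-\dot\theta)$, $v_k$, $\dot v_k$ and $\mathcal{J}_1$, chosen so as to cancel the $\eps^{-1}$ cross-terms in the expansion of $\tilde Q_\eps[\psi_k^\eps]$. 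By the min-max principle, the surviving $O(1)$ contribution yields $\lambda_k(\eps) \leq E_1/\eps^2 + \mu_k + o(1)$.

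For the matching lower bound, I would take normalized eigenfunctions $u_k^\eps$ of $\tilde H_\eps$ and decompose $u_k^\eps(s,t) = \varphi_k^\eps(s)\mathcal{J}_1(t) + \phi_k^\eps(s,t)$ with $(\mathcal{J}_1,\phi_k^\eps(s,\cdot))_{\sii(\omega)} = 0$, exploiting the transverse spectral gap $E_2 > E_1$ together with the uniform bound on $\lambda_k(\eps) - E_1/\eps^2$ to obtain $\|\phi_k^\eps\|_{\sii(I\times\omega)} = O(\eps)$ and $\|\nabla \phi_k^\eps\|_{\sii(I\times\omega)} = O(1)$. This yields strong $\sii(I)$ compactness of $\{\varphi_k^\eps\}$ along a subsequence; expanding the Rayleigh quotient as in the upper bound, using $-\Delta_D^\omega \mathcal{J}_1 = E_1 \mathcal{J}_1$ and integration by parts, identifies the limit of $\varphi_k^\eps$ as an eigenfunction of $L$, and a standard mutual-orthogonality argument across $k = 1, \dots, j$ delivers $\lambda_k(\eps) \geq E_1/\eps^2 + \mu_k + o(1)$.

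The hard part is the algebraic extraction of the two effective potentials. The bending term $-\kappa^2/4$ is of Born--Oppenheimer type: removing the Jacobian by the unitary transformation $\psi \mapsto h_\eps^{-1/2}\psi$ produces an additional zeroth-order perturbation whose transverse average against $\mathcal{J}_1^2$ converges, after expanding $h_\eps$ to second order in~$\eps$, to the classical $-\kappa^2/4$ by virtue of the moment identities satisfied by a planar Dirichlet ground state. The twisting contribution arises from the interaction of the skew coupling $(\tau - \dot\theta)\partial_u$ with $\mathcal{J}_1$ through the corrector $\zeta_k$; the resulting cross-sectional quadratic form gives $C(\omega)$ as a variational constant involving $\partial_u\mathcal{J}_1$, which by the very computation in the proof of Lemma~\ref{Lem.cornerstone} vanishes precisely when $\omega$ is rotationally invariant with respect to the origin.
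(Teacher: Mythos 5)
The paper does not prove this statement at all: Theorem~\ref{Thm.thin} is quoted verbatim from Bouchitt\'e, Mascarenhas and Trabucho \cite{BMT} as background for an open problem (the extension to unbounded tubes), so there is no in-paper proof to compare yours against. Judged on its own, your sketch is the standard two-sided dimensional-reduction argument (rescale the cross-section, shift by $E_1/\eps^2$, upper bound by trial functions $v_k\mathcal{J}_1+\eps\zeta_k$, lower bound by transverse-mode decomposition plus the spectral gap $E_2-E_1$ and compactness of the longitudinal profiles), and this route is viable; it is in the spirit of the Duclos--Exner analysis of non-twisted thin tubes rather than the $\Gamma$-convergence framework of \cite{BMT}. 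What each buys: your approach gives the eigenvalue asymptotics directly via min--max and is closer to the quadratic-form machinery used elsewhere in this paper, whereas the variational-convergence route of \cite{BMT} packages the identification of the limit functional (and hence of $C(\omega)$) more systematically.

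Two points in your outline need to be made honest before this counts as a proof. First, the $O(\eps^{-1})$ bookkeeping: the solvability of the cross-sectional Poisson problem for $\zeta_k$ requires the source to be orthogonal to $\mathcal{J}_1$, and the cancellation of the remaining $\eps^{-1}$ terms in the Rayleigh quotient rests on the identities $(\mathcal{J}_1,\partial_u\mathcal{J}_1)_{\sii(\omega)}=0$ and $\int_\omega t_j\,|\nabla'\mathcal{J}_1|^2 = E_1\int_\omega t_j\,\mathcal{J}_1^2$ (obtained by integrating $\nabla'\cdot(t_j\mathcal{J}_1\nabla'\mathcal{J}_1)$ over $\omega$); you invoke ``moment identities'' but attach them to the derivation of $-\kappa^2/4$, which instead comes simply from $-\kappa^2/(4h_\eps^2)\to-\kappa^2/4$ after the unitary removal of the Jacobian. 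Second, you leave $C(\omega)$ as an unspecified ``variational constant''; the computation actually closes because the transverse gap on $\{\mathcal{J}_1\}^\perp$ scales like $\eps^{-2}$, so the corrector contributes nothing at order one and $C(\omega)=\|\partial_u\mathcal{J}_1\|_{\sii(\omega)}^2$, from which the stated equivalence with rotational non-invariance of $\omega$ follows exactly as in Lemma~\ref{Lem.cornerstone}. With these two points supplied (and the regularity of $\kappa$, $\tau$, $\theta$ needed for the expansions made explicit), the argument is complete.
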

\noindent
Although Theorem~\ref{Thm.thin} was proved in~\cite{BMT}
for bounded tubes only, we believe that the convergence
results extend to eigenvalues below the essential spectrum
in unbounded tubes as well. Prove it.
In the case of non-twisted tubes
(\ie, $\tau-\dot\theta=0$ or $C(\omega)=0$),
Theorem~\ref{Thm.thin} has been known for several years
\cite{DE,FK4}, including unbounded tubes.
It would be also desirable to extend the study
of the nodal set of eigenfunctions performed in~\cite{FK4}
to twisted tubes.

\smallskip
\noindent
5.~\underline{Other physical models}.
Is the repulsive/stabilizing effect of twisting limited
to the specific model of the present paper
(related to the electronic transport
in quantum heterostructures)?
Or, does it have a counterpart in other physical situations,
namely in electromagnetism or fluid mechanics?
More specifically, is there a Hardy-type inequality related
to the linear operator pencil studied in~\cite{JLP}?

%
%

\providecommand{\bysame}{\leavevmode\hbox to3em{\hrulefill}\thinspace}
\providecommand{\MR}{\relax\ifhmode\unskip\space\fi MR }
\providecommand{\MRhref}[2]{%
  \href{http://www.ams.org/mathscinet-getitem?mr=#1}{#2}
}
\providecommand{\href}[2]{#2}

\end{document}